\newcommand{\iv}{\mathds{1}}
\newcommand{\Z}{\mathcal{Z}}
\newcommand{\Zset}{\Z}
\newcommand{\bZ}{Z}
\newcommand{\bz}{z}
\newcommand{\bYone}{Y(1)}
\newcommand{\bYzero}{Y(0)}
\newcommand{\etaref}{\eta_0}
\newcommand{\Cz}[1]{C(\hat{\tau}(Z); #1)}
\newcommand{\Czln}[1]{C(\hat{\tau}(z); #1)}
\newcommand{\Lz}[1]{L_\alpha(#1)}
\newcommand{\Uz}[1]{U_\alpha(#1)}
\newtheorem{definition}{Definition}
\newtheorem{proposition}{Proposition}
\newtheorem{theorem}{Theorem}
\newtheorem{corollary}{Corollary}
\newtheorem{example}{Example}
\theoremstyle{remark}
\newcounter{ex1}
\newcounter{ex2}
\newcounter{ex3}
\newcounter{ex4}
\newcounter{ex5}
\newcounter{exnew}
\begin{document}

\renewcommand\thmcontinues[1]{continued}

\title{Conditional As-If Analyses in Randomized Experiments\thanks{\noindent{\textit{Email: } \texttt{nicole.pashley@rutgers.edu}. The authors would like to thank Zach Branson, Tirthankar Dasgupta, Kosuke Imai, and participants of the 2019 Atlantic Causal Inference Conference in Montreal, Quebec as well as the Miratrix C.A.R.E.S. Lab for useful feedback.
Nicole Pashley was supported by the National Science Foundation Graduate Research Fellowship under Grant No. DGE1745303, while working on this paper.
The research reported here was also partially supported by the Institute of Education Sciences, U.S. Department of Education, through Grant R305D150040.
The opinions expressed are those of the authors and do not necessarily represent views of the Institute, the U.S. Department of Education, or the National Science Foundation.
}}}

    \author{Nicole E. Pashley\\Department of Statistics, Rutgers University   \and Guillaume W. Basse\\  Department of Statistics, Stanford University \and Luke W. Miratrix\\ Harvard Graduate School of Education}
    
\maketitle

\begin{abstract}
The injunction to `analyze the way you randomize' is well-known to statisticians since Fisher advocated for randomization as the basis of inference.
  Yet even those convinced by the merits of randomization-based inference seldom follow this injunction to the letter.
  Bernoulli randomized experiments are often analyzed as completely randomized experiments, and completely randomized experiments are analyzed as if they had been stratified; more generally, it is not uncommon to
  analyze an experiment as if it had been randomized differently.
  This paper examines the theoretical foundation behind this practice within
  a randomization-based framework. Specifically, we ask when is it legitimate
 to analyze an experiment randomized according to one design as if
  it had been randomized according to some other design. We show that
  a sufficient condition for this type of analysis to be valid is that the
  design used for analysis be derived from the original design by an appropriate form of conditioning.
  We use our theory to justify certain existing methods, question others, and
  finally suggest new methodological insights such as conditioning on
  approximate covariate balance.
\end{abstract}

\noindent%
{\it Keywords:} Ancillary statistics; Causal inference; Conditional inference; Randomization inference; Relevance.
\vfill

%\newpage

%\tableofcontents

%\newpage

%%%%%%%%%%%%%%%%%%%%%%%%%%%%%%%%%%%%%%%%%%%%%%%%%%%%%%%%%%%%%%%%%%%%%%%%%%%%

%%%%%%%%%%%%%%%%%%%%
\section{Introduction}\label{sec:intro}
%%%%%%%%%%%%%%%%%%%%

It is a long-standing idea in statistics that the design of an experiment should
inform its analysis. Fisher placed the physical act of randomization at the center
of his inferential theory, enshrining it as ``the reasoned basis'' for inference
\citep{fisher1935design}. Building on these insights,
\cite{kempthorne1955randomization} proposed a randomization theory of inference
for experiments, in which inference follows from the precise randomization
mechanism used in the design. This approach has gained popularity in the causal
inference literature because it relies on very few assumptions
\citep{neyman_1923, imbens2015causal}.

Yet the injunction to ``analyze  the way you randomize'' is not always followed in
practice, as noted by \cite{senn2004controversies} who argues that in clinical
trials the analysis does not always follow strictly from the randomization
performed. For instance, a Bernoulli randomized experiment, in which each unit is assigned to treatment or control independently of other units (for instance, by independent coin flips) leading to a random number of treated units, might be analyzed as
if it were a completely randomized experiment, in which the total number of treated units would be fixed prior to randomization.
Or, similarly, we might analyze a completely
randomized experiment as if it had been stratified.

This paper studies such as-if analyses in detail in the context of Neymanian causal inference, and makes three
contributions. First we formalize the notion of as-if analyses, motivating
their usefulness and proposing a rigorous validity criterion
(Section~\ref{sec:as_if_con_pro}). Our framework is grounded in the
randomization-based approach to inference. In the two examples we
described above, the analysis conditions on some aspect of the observed
assignment; for instance, in the first example, the complete randomization
is obtained by fixing the number of treated units to its observed value. 
The idea that inference should be conditional on quantities that affect the
precision of estimation is not new in the experimental design literature
\citep[e.g.,][]{cox1958, cox2009randomization, dawid1991fisherian} or the larger statistical
inference literature \citep[e.g.,][]{sarndal1989weighted, sundberg2003conditional},
and it has been reaffirmed recently in the causal inference literature
\citep{branson2019randomization, hennessy2016conditional}.

Our second contribution is to verify that in our setting, conditioning leads to valid
as-if analyses. We also warn against a dangerous pitfall: some as-if
analyses look conditional on the surface, but are in fact neither conditional
nor valid. This is the case, for instance, of analyzing a completely randomized
experiment by conditioning on the covariate balance
being no worse than that of the observed assignment
(Section~\ref{subsec:conditioning}).
We further point out that the motivation for the use of conditioning in analyzing experimental data should not be to increase precision, but rather to increase relevance.
Consider the case of a Bernoulli randomized experiment with 100 units and a 0.5 probability of any given unit receiving treatment.
Intuitively, if the assignment we observe has 1 treated unit and 99 control units, we should have a wider confidence interval than if the observed assignment had 50 treated units and 50 control units.
This is exactly what will occur if analyzing the experiment conditionally, as a completely randomized experiment.
The idea of relevance is discussed further in Section~\ref{subsec:relevance}.

Our third contribution is to show how
our ideas can be used to suggest new methods (Section~\ref{sec:ex}) and also show how they can be used to evaluate existing methods (Section~\ref{sec:disc_mat}).
We finish the work by discussing practical challenges (Section~\ref{append:est_des}).
Our goal in these discussions is primarily to highlight areas for future work, as this paper is primarily conceptual in nature.

%%%%%%%%%%%%%%%%%%%%%%%%%%%%%%%%%%%%%%%%%%%%%%%%%%%%%%%%%%%%%%%%%%%%%%%%%%%%

%%%%%%%%%%%%%%%%%%%%
\section{As-if confidence procedures}\label{sec:as_if_con_pro}
%%%%%%%%%%%%%%%%%%%%

\subsection{Setup}
\label{subsec:validityset}
%%%%%%%%%%%%%%%%%%%%

Consider $N$ units and let $Z_i \in \{0,1\}$ $(i=1, \ldots, N)$ be a
binary treatment indicator for unit $i$. We adopt the potential outcomes
framework \citep{rubin_1974, neyman_1923}, where, under the Stable Unit
Treatment Value Assumption \citep{rubin_1980}, each unit has two potential
outcomes, one under treatment, $Y_i(1)$, and one under control, $Y_i(0)$, and the observed response is 
$Y_i^{obs} = Z_i Y_i(1) + (1 - Z_i) Y_i(0)$. We denote by $\bZ$, $\bYone$,
and $\bYzero$ the vectors of binary treatment assignments, treatment
potential outcomes, and control potential outcomes for all $N$ units.
Let $\tau$ be our estimand of interest; in most of the examples we
take $\tau$ to be the average treatment effect
$\tau = N^{-1} \sum_{i=1}^N \{Y_i(1) - Y_i(0)\}$, but our results apply more
generally to any estimand that is a function of the potential outcome
vectors $(\bYzero, \bYone)$.
An estimator $\hat{\tau}$ is
a function of the assignment vector $\bZ$ and the observed outcomes. For
clarity, we will generally write $\hat{\tau} = \hat{\tau}(\bZ)$ to emphasize
the dependence of $\hat{\tau}$ on $\bZ$, but keep the dependence on the
potential outcomes implicit. Denote by $\eta_0$ the design describing how
treatment is allocated, so for any particular assignment $\bz$, $\eta_0(\bz)$
gives the probability of observing $\bz$ under design $\eta_0$. In
randomization-based inference, we consider the potential outcomes
$(\bYzero, \bYone)$ as fixed and partially unknown quantities; the randomness
comes exclusively from the assignment vector $\bZ$ following the distribution
$\eta_0$. The estimand $\tau$ is therefore fixed because it is a function of the
potential outcomes only, while the observed outcomes and the estimator
$\hat{\tau}(\bZ)$ are random because they are functions of the random assignment
vector $\bZ$. 

Our focus is on the construction of confidence intervals for $\tau$
under the randomization-based perspective. We define a confidence procedure as a
function $C$ mapping any assignment $\bz \in \Zset_{\eta_0}$ and associated
vector of observed outcomes, $Y^{obs}$, to an interval in $\mathds{R}$, where
$\Zset_{\eta_0} = \{\bz \in \{0,1\}^N: \eta_0(\bz) > 0\}$ is the support of the
randomization distribution. Standard confidence intervals are examples of
confidence procedures, and are usually based on an approximate distribution of
$\hat{\tau}(\bZ) - \tau$. For careful choices of $\hat{\tau}$ and $\eta_0$, the
random variable $\hat{\tau}(\bZ) - \tau$, standardized by its standard deviation
$\sqrt{\text{var}_{\etaref}(\hat{\tau})}$ induced by the design $\etaref$,
is asymptotically standard normal \citep[see][]{li2017general}. We can then
construct an interval
\begin{equation}\label{eq:non-oracle}
\left[\hat{\tau}(\bZ) - 1.96 \sqrt{\hat{V}_{\etaref}(\bZ)}, \hat{\tau}(\bZ) + 1.96 \sqrt{\hat{V}_{\etaref}(\bZ)}\right],
\end{equation}
where $\hat{V}_{\etaref}$ is an estimator of $\text{var}_{\etaref}(\hat{\tau})$.
Such a confidence interval is built from the (estimated) 2.5 and 97.5 quantiles of $\hat{\tau}(\bZ) - \tau$, shifted by $\hat{\tau}(\bZ)$.
Discussing the validity of these kinds of confidence procedures is difficult for
two reasons. First, they are generally based on asymptotics, so validity in
finite populations can only be approximate. Second, they use the square root of estimates of the
variance which, in practice, tend to be biased, especially with finite population causal inference under which $\text{var}_{\etaref}(\hat{\tau})$ may only be estimated conservatively, in generality.
These two issues are often accepted in practice but obscure the conceptual underpinnings of as-if analyses. We circumvent these issues by
focusing instead on oracle confidence procedures, which are based on the true
quantiles of the distribution of $\hat{\tau}(\bZ) - \tau$ induced by the design
$\etaref$. Specifically, we consider $1-2\alpha$ level confidence intervals
($0 < \alpha < 1/2$) of the form
\begin{equation}\label{eq:basic-oracle}
  \Cz{\etaref} = [\hat{\tau}(\bZ) - U_{\alpha}(\etaref),
  \hat{\tau}(\bZ) - L_{\alpha}(\etaref)]
\end{equation}
where $U_{\alpha}(\etaref)$ and $L_{\alpha}(\etaref)$ are the $\alpha$ upper and
lower quantiles, respectively, of the distribution of $\hat{\tau}(\bZ) - \tau$
under design $\etaref$. Because they do not depend on $\bZ$, the quantiles
$U_{\alpha}(\etaref)$ and $L_{\alpha}(\etaref)$ are fixed.
The confidence procedure in Equation~(\ref{eq:basic-oracle}) is
an oracle procedure because unlike the interval in
Equation~(\ref{eq:non-oracle}), it cannot be computed from observed data. Oracle
procedures allow us to set aside the practical issues of approximation and
estimability to focus on the essence of the problem. We discuss some of the
practical issues that occur without oracles in Section~\ref{append:est_des}.

\subsection{As-if confidence procedures}
%%%%%%%%%%%%%%%%%%%%

Given data from an experiment, it is natural to consider the confidence procedure
$\Cz{\etaref}$ constructed with the design $\etaref$ that  was actually used
to randomize the treatment assignment.
This would be following the ``analyze as you randomize'' principle.
Consider, however, the oracle procedure
\begin{equation}
\Cz{\eta} = [\hat{\tau}(\bZ) - U_{\alpha}(\eta), \hat{\tau}(\bZ) - L_{\alpha}(\eta)], \label{eq:const_map}
\end{equation}
based on the distribution of $\hat{\tau}(\bZ) - \tau$ induced by some other
design $\eta$ that assigns positive probability to the observed assignment.
In this case we say that the experiment is analyzed as-if it were randomized according to
$\eta$. We generalize this idea further by allowing the design $\eta$ used
in the oracle procedure to vary depending on the observed assignment. This can
be formalized with the concept of a design map.
\begin{definition}[Design map]
\label{def:design_mapping}
  Let $\mathcal{D}$ be the set of designs (probability distributions) with support in $\{0,1\}^N$.
  A
  function $H: \Z_{\etaref} \rightarrow \mathcal{D}$ which maps each
  assignment $\bz \in \Zset_{\eta_0}$ to a design $H(\bz) \in \mathcal{D}$
  is a design map.
\end{definition}
As an example, the design of the experiment, $\eta_0$, is one element of the set of designs $\mathcal{D}$ (i.e., $\eta_0 \in \mathcal{D}$).

A confidence procedure $\Cz{H(\bZ)}$ can then be constructed using
design map $H$ as follows:
\begin{equation*}
\Cz{H(\bZ)}= [\hat{\tau}(\bZ) - \Uz{H(\bZ)}, \hat{\tau}(\bZ)-\Lz{H(\bZ)}].
\end{equation*}
This is an instance of as-if analysis, in which the design used to
analyze the data depends on the observed assignment. That is, while
we traditionally have one rule for how to create a confidence
interval, we may now have many rules, possibly as many as
$|\Z_{\etaref}|$, specified via the design map. In the special
case in which the design map $H$ is constant, i.e $H(\bZ) = \eta$
for all $\bZ \in \Z_{\etaref}$, we write $\Cz{\eta}$ instead of $\Cz{H(\bZ)}$,
with a slight abuse of notation, leading to Equation~\ref{eq:const_map}.
Note that the design map function itself is fixed before observing the treatment assignment.
\setcounter{ex1}{\value{example}}
\begin{example}
\label{example:bern-crd}
Consider an experiment run according to a Bernoulli design with probability of
treatment $\pi = 0.5$, where, to ensure that our estimator is defined, we remove assignments with no treated or control
units. That is, $\Z_{\eta_0}$ is the set of all assignments such that at least
one unit receives treatment and one unit receives control, and
$\etaref = \text{Unif}(\Z_{\etaref})$. Let $\eta_k$ ($k = 1, \ldots, N-1$) be
the completely randomized design with $k$ treated units. We use the design map
$H(\bZ) = \eta_{N_1(\bZ)}$ where $N_1(\bZ) = \sum_{i=1}^N Z_i$. That is, we analyze
the Bernoulli design as if it were completely randomized, with the observed
number of treated units assigned to treatment. Consider the concrete case where
$N = 10$. Suppose that we observe assignment $\bz$ with $N_1(\bz) = 3$. The
design $H(\bz) = \eta_3$ corresponds to complete randomization with 3
units treated out of 10, and the confidence procedure
$C(\bz;H(\bz))$ is constructed using the distribution of
$\hat{\tau}(\bZ)-\tau$ induced by $\eta_3$. We analyze as if a completely
randomized design with 3 treated units had actually been run. If instead we
observe $\bz^*$ with $N_1(\bz^*) = 4$, then the confidence procedure
$C(\bz^*;H(\bz^*))$ would be constructed by considering the distribution
of $\hat{\tau}(\bZ)-\tau$ induced by $\eta_4$.
\end{example}

\setcounter{ex2}{\value{example}}
\begin{example}
\label{example:block}
Let our units have a categorical covariate $X_i$.
Categorical covariates form blocks: they partition the sample based on covariate value.
Assume that the actual experiment was run using complete randomization with fixed number of treated units $N_1$, but discarding assignments where there is not at least one treated unit and one control unit in each block.
That is, $\Z_{\eta_0}$ is the set of all assignments with $N_1$ treated units such that at least one unit receives treatment and one unit receives control in each block and $\etaref = \text{Unif}(\Z_{\etaref})$.
This restriction on the assignment space is to account for the associated blocked estimator being undefined; with moderate size blocks we can ignore this nuisance event due to its low probability.
For vector $\kappa$ whose $j$th entry is an integer strictly less than the size of the $j$th block and strictly greater than 0, let $\eta_{\kappa}$ be a block randomized design with the number of treated units in each block corresponding to the numbers given in vector $\kappa$.
We use the design map $H(\bZ) = \eta_{N_{1,block}(\bZ)}$ where $N_{1,block}(\bZ)$ is the vector that gives the number of treated units within each block and $\eta_{N_{1,block}(\bZ)}$ is the blocked design corresponding to $N_{1,block}(\bZ)$.
We use post-stratification\footnote{The term post-stratification originates in the survey sampling literature but has been brought into the design literature to describe the act of analyzing an experiment as a blocked experiment post randomization \citep[see][]{Miratrix2013}.} \citep{holt1979post, Miratrix2013} to analyze this completely randomized design as if it were block randomized.
\end{example}

\setcounter{exnew}{\value{example}}
\begin{example}
\label{example:fac}
Consider an experiment with factorial structure in the treatments, particularly with two factors of interest each with two levels, active and passive (the case with more factors or levels is immediate).
Assume that the actual experiment was run using complete randomization on each factor separately, such that there were exactly $N_1$ units receiving the active level for each factor.
This marginal randomization means that the number of units assigned to the combined levels of both factors is random.
To ensure our estimator is well defined, we assume that there is a non-zero number of units assigned to each treatment combination.
Define $n_{z_1,z_2}$ as the observed number of units  assigned to level $z_1$ of factor 1 and level $z_2$ of factor 2.
Then $\Z_{\eta_0}$ is the set of all assignments such that $0<N_1 < N$ units receive the active level of factor 1 and  $0<N_1 < N$ units receive the active level of factor 2, and $n_{z_1,z_2}>0$ for all $z_1$ and $z_2$.
Let $\etaref = \text{Unif}(\Z_{\etaref})$.
For vector $\gamma$ of length four whose $j$th entry is an integer strictly less than $N$ and strictly greater than 0 such that $\sum_j\gamma_j = N$, let $\eta_{\gamma}$ be a standard factorial design with complete randomization on the joint treatment groups, with treatment group sizes corresponding to the numbers given in vector $\gamma$.
We use the design map $H(\bZ) = \eta_{n_{Z_1,Z_2}}$, where $n_{Z_1,Z_2}$ is the vector of the four observed $n_{z_1,z_2}$.
$H(\bZ)$ is a factorial design where the vector $n_{Z_1,Z_2}$ gives the number of units assigned to each treatment combination.
Then this design map leads to the analysis as a factorial design with complete randomization, fixing the number of units in each treatment group (rather than just the marginal number for each factor separately).
\end{example}

\setcounter{ex3}{\value{example}}
\begin{example}
\label{example:rerand}
Assume that the actual experiment was run using complete randomization with exactly $N_1$ units treated.
That is, $\Z_{\eta_0}$ is the set of all assignments such that $0<N_1 < N$ units receive treatment and $\etaref = \text{Unif}(\Z_{\etaref})$.
Let $X_i$ be a continuous covariate for each unit $i$. 
Define a covariate balance measure $\Delta_X(\bZ)$, e.g.
\begin{equation*}
		\Delta_X(\bZ) 
        = \frac{1}{N_1} \sum_i Z_i X_i 
        - \frac{1}{N-N_1} \sum_i (1-Z_i) X_i.
\end{equation*}
For any assignment $\bz$, define $\mathcal{A}(\bz) = \{\bZ: |\Delta_X(\bZ)| \leq | \Delta_X(\bz)|\}$, which will give us the set of assignments with covariate balance better than or equal to the observed covariate balance.
We use the design map $H: \bz \rightarrow \text{pr}_{\etaref}\{\bZ \mid \bZ \in \mathcal{A}(\bz)\}$.
Then this design map leads to analyzing the completely randomized design as if it were rerandomized \citep[see][for more on rerandomization]{morgan2012rerandomization} with acceptable covariate balance cut-off equal to that of the observed covariate balance.
\end{example}

\setcounter{ex4}{\value{example}}
\begin{example}
\label{example:crd}
Assume that the actual experiment was run using block randomization where $N_{1,block}$ is a fixed vector, and hence we can drop the $\bZ$ in the notation used in Example~\ref{example:block}, that gives the number of treated units within each block.
That is, $\Z_{\eta_0}$ is the set of all assignments such that the number of treated units in each block is given by $N_{1,block}$ and $\etaref = \text{Unif}(\Z_{\etaref})$.
Let $\eta$ correspond to a completely randomized design, as laid out in Example~\ref{example:block}, with $N_1$, the total number of units treated across all blocks, treated.
We use the constant design map $H(\bZ) = \eta$.
This corresponds to analyzing this block randomized design as if it were completely randomized.
\setcounter{ex5}{\value{example}}
\end{example}
Throughout the paper, we focus on settings in which the same
estimator is used in the original analysis and in the as-if analysis. In
practice, the two analyses might employ different estimators. For instance,
in Example~\ref{example:block}, we might analyze the completely randomized
experiment with a difference-in-means estimator, but use the standard blocking
estimator to analyze the as-if stratified experiment. We discuss this point
further in Section~\ref{append:est_des} but in the rest
of this article, we fix the estimator and focus on the impact of changing
only the design.

\subsection{Validity, relevance and conditioning}
\label{subsec:relevance}
%%%%%%%%%%%%%%%%%%%%

We have formalized the concept of an as-if analysis, but we have not yet 
addressed an important question: why should we even consider such analyses
instead of simply analyzing the way we randomize? Before we answer this
question, we first introduce a minimum validity criterion for as-if
procedures.
\begin{definition}[Valid confidence procedure]
  Fix $\gamma \in [0,1]$. Let $\etaref \in \mathcal{D}$ be the design used
  in the original experiment and let $H$ be a design map on $\Z_{\etaref}$.
  The confidence procedure $\Cz{H(\bZ)}$ is said to be valid with respect
  to $\etaref$, or $\etaref$-valid, at level $\gamma$ if
  $\text{pr}_{\etaref}\{\tau \in \Cz{H(\bZ)}\} \geq \gamma$. When a procedure is
  valid at all levels, we simply say that it is $\etaref$-valid. 
\end{definition}
This criterion is intuitive: a confidence procedure is valid if its coverage is
as advertised over the original design. The following simple result formalizes
the popular injunction to ``analyze the way you randomize''
\citep[][p. 317]{lachin1988properties}:
\begin{proposition}\label{prop:nulldesignmap}
    The procedure $\Cz{\etaref}$ is $\etaref$-valid.
\end{proposition}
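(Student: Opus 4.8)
The plan is to unwind the definitions and reduce the claim to the defining property of the quantiles $U_\alpha(\etaref)$ and $L_\alpha(\etaref)$. Since the design map here is the constant map $H(\bZ) = \etaref$, the procedure is $\Cz{\etaref} = [\hat\tau(\bZ) - U_\alpha(\etaref),\, \hat\tau(\bZ) - L_\alpha(\etaref)]$, where the two quantiles are fixed (non-random) numbers. So I would start by writing out
\[
\text{pr}_{\etaref}\{\tau \in \Cz{\etaref}\}
= \text{pr}_{\etaref}\bigl\{\hat\tau(\bZ) - U_\alpha(\etaref) \le \tau \le \hat\tau(\bZ) - L_\alpha(\etaref)\bigr\},
\]
and then rearrange the event, subtracting $\tau$ and $\hat\tau(\bZ)$ through the inequalities, to get the equivalent event $\bigl\{L_\alpha(\etaref) \le \hat\tau(\bZ) - \tau \le U_\alpha(\etaref)\bigr\}$.

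Next I would invoke the definition of $U_\alpha(\etaref)$ and $L_\alpha(\etaref)$ as the upper and lower $\alpha$ quantiles of the distribution of $W := \hat\tau(\bZ) - \tau$ under $\etaref$. By the standard quantile bounds, $\text{pr}_{\etaref}\{W > U_\alpha(\etaref)\} \le \alpha$ and $\text{pr}_{\etaref}\{W < L_\alpha(\etaref)\} \le \alpha$, so a union bound gives $\text{pr}_{\etaref}\{L_\alpha(\etaref) \le W \le U_\alpha(\etaref)\} \ge 1 - 2\alpha$. Since the stated nominal level of the procedure in Equation~(\ref{eq:basic-oracle}) is $1 - 2\alpha$, this is exactly $\etaref$-validity at that level, and letting $\alpha$ range over $(0, 1/2)$ gives validity at all levels.

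The main subtlety — really the only place where care is needed — is the handling of ties / atoms in the randomization distribution, since $W = \hat\tau(\bZ) - \tau$ is a discrete random variable (the assignment space $\Z_{\etaref}$ is finite). With a discrete distribution the quantile can fall on an atom, so one must be careful that the inequalities defining the interval are the right ones (closed versus open) to ensure the tail probabilities are each at most $\alpha$ rather than merely approaching $\alpha$. I would adopt the convention that $U_\alpha$ is the smallest value $u$ with $\text{pr}_{\etaref}\{W \le u\} \ge 1 - \alpha$ and $L_\alpha$ the largest value $\ell$ with $\text{pr}_{\etaref}\{W \ge \ell\} \ge 1 - \alpha$ (consistent with "$\alpha$ upper and lower quantiles"), under which the two tail bounds above hold exactly, and note that this is why the guarantee is an inequality ($\ge \gamma$) rather than an equality. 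Everything else is a one-line rearrangement, so no further obstacle arises.
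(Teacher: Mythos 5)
Your proposal is correct and is essentially the argument the paper relies on: the paper states Proposition~\ref{prop:nulldesignmap} without a written proof, treating it as immediate from the construction of the oracle interval in Equation~(\ref{eq:basic-oracle}) from the true quantiles of $\hat{\tau}(\bZ)-\tau$ under $\etaref$, which is exactly what you spell out. Your explicit handling of atoms in the discrete randomization distribution (and the resulting $\geq \gamma$ rather than $=\gamma$) is a careful touch the paper glosses over, but it does not change the route of the argument.
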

Given that the procedure $\Cz{\etaref}$ is $\etaref$-valid, why should we
consider alternative as-if analyses, even valid ones? That is, having observed
$\bZ$, why should we use a design $H(\bZ)$ to perform the analysis, instead of
the original $\etaref$?
%
%% Relevance
%
A natural, but only partially correct, answer would be that the goal of an
as-if analysis is to increase the precision of our estimator and obtain
smaller confidence intervals while maintaining validity. After all, this is the
purpose of restricted randomization approaches when considered at the design
stage. For instance, if we have reasons to believe that a certain factor might
affect the responses of experimental units, stratifying on this factor will
reduce the variance of our estimator. This analogy, however, is misleading. The primary goal of an as-if analysis is not to increase the precision of
the analysis but to increase its relevance. 
In fact, we argue heuristically in Section~\ref{append:est_des} that (given a conditionally unbiased estimator) an as-if analysis will not increase precision on average, over the original assignment distribution.
Rather, it is frequently the change of estimator that has created this impression.
 
Informally, an observable quantity is relevant if a reasonable person cannot tell if a confidence interval will be too long or too short given that quantity.
The concept of relevance captures the
idea that our inference should be an accurate reflection of our uncertainty
given the observed information from the realized randomization; our confidence intervals should be narrower
if our uncertainty is lower and wider if our uncertainty is higher.
Defining the concept of relevance formally is difficult.
See \cite{buehler1959some} and \cite{robinson1979conditional} for a more formal treatment.
Our appendix~\ref{append:simp_rel} gives a precise discussion in the context of betting games, following \cite{buehler1959some}.
We will not attempt a formal definition here and instead, following \cite{liu2016there}, we will illustrate its essence with a simple example.

Consider the Bernoulli design scenario of Example~\ref{example:bern-crd} with 100 units. From
Equation~\eqref{eq:basic-oracle}, the oracle interval $\Cz{\etaref}$ constructed
from the original design has the same length regardless of the assignment vector
$\bZ$ actually observed. Yet, intuitively, an observed assignment vector with
$1$ treated unit and $99$ control units should lead to less precise inference,
and therefore wider confidence intervals than a balanced assignment vector with
$50$ treated units and $50$ control units. In a sense, the confidence interval
$\Cz{\etaref}$ is too narrow if the observed assignment is severely imbalanced,
too wide if it is well balanced, but right overall.
Let $\Zset_1$ be the set of all assignments with a single treated unit
and $\Zset_{50}$ the set of all assignments with 50 treated units. If the
confidence interval $\Cz{\etaref}$ has level $\gamma$, we expect
\begin{align*}
  \text{pr}_{\etaref}\{\tau \in \Cz{\etaref} \mid \bZ \in \Zset_1\} \leq \gamma; \quad 
  & \text{pr}_{\etaref}\{\tau \in \Cz{\etaref} \mid \bZ \in \Zset_{50}\} \geq \gamma;
\end{align*}
where the inequalities are usually strict. See appendix~\ref{append:simp_rel} for a proof in a concrete setting. More formally,
we say that the procedure is valid marginally, but is not valid conditional on the
number of treated units. This example is illustrated in
Figure~\ref{fig:bern_cond}, which shows that the conditional coverage is below 0.95 if the
proportion of treated units is not close to 0.5, and above 0.95 if the
proportion is around 0.5. To remedy this, we should use wider confidence
intervals in the first case, and narrower ones in the second. The right panel of
Figure~\ref{fig:bern_cond} shows that in this case, a large fraction of assignments
have a proportion of treated units close to 0.5, therefore the confidence interval
$\Cz{\etaref}$ will be too large for many realizations of the design $\eta_0$.
\begin{figure}
	\centering
	\includegraphics[scale=0.53]{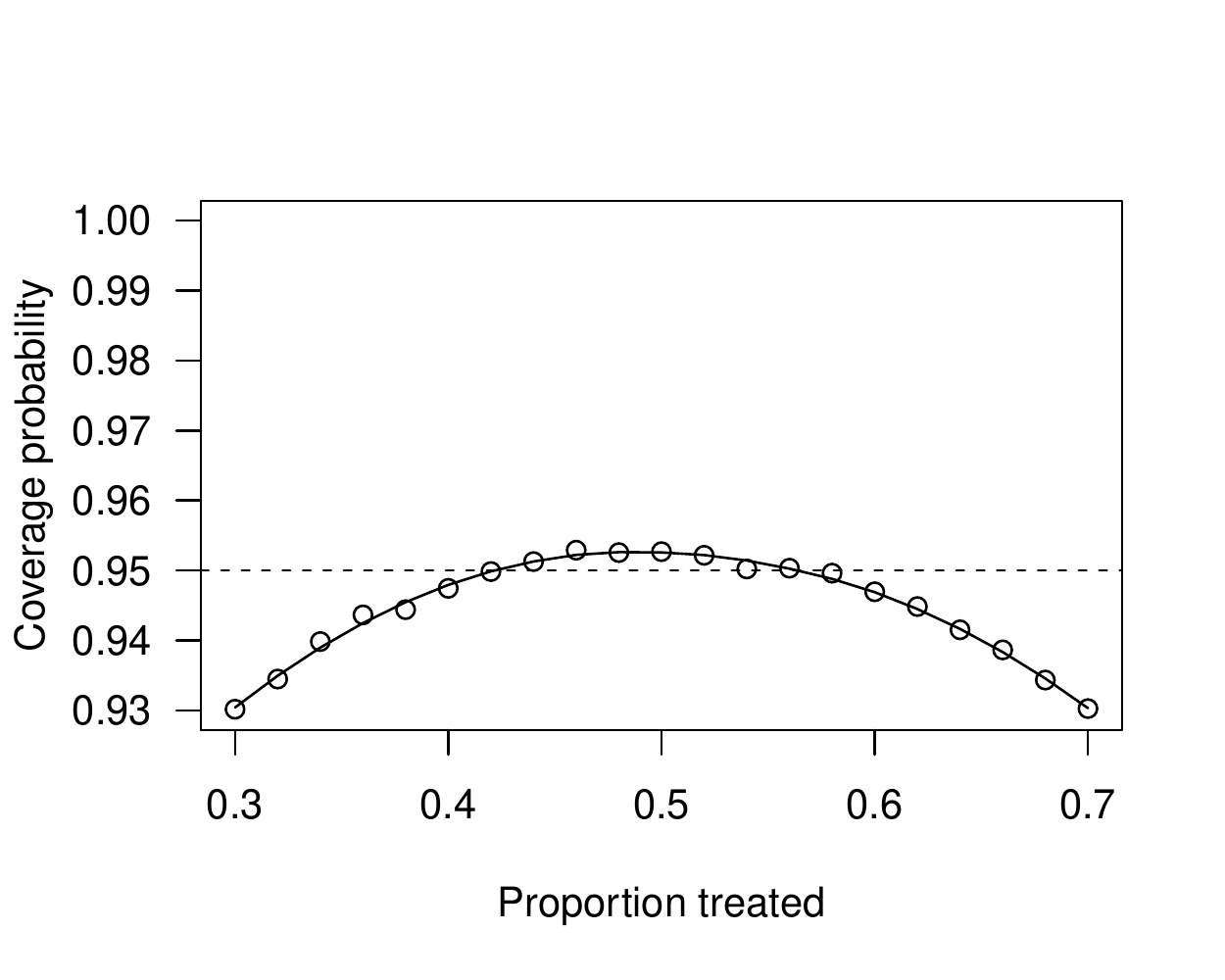}
	\includegraphics[scale=0.53]{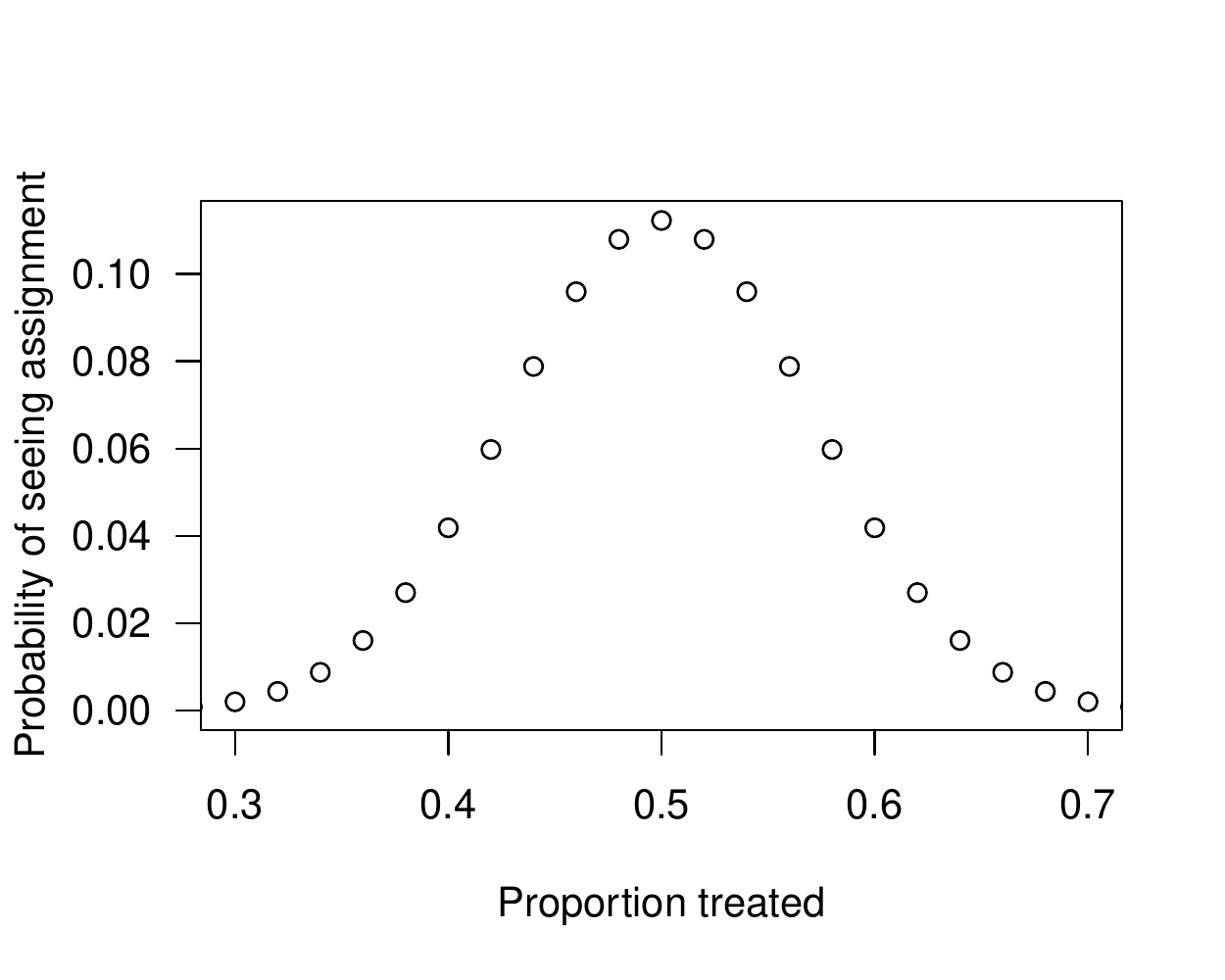}
        \caption{Left: conditional coverage for a Bernoulli experiment with 100 units
          each having probability 0.5 of being treated. Right: distribution of the
          proportion of treated units for this Bernoulli experiment.
          }
    \label{fig:bern_cond}
\end{figure}
Our confidence interval should be relevant to the assignment vector
actually observed, and reflect the appropriate level of uncertainty. In the
context of randomization-based inference, this takes the form of an as-if
analysis.

%% Conditioning

The concept of relevance and its connection to conditioning has a long history
in statistics.~\cite{cox1958} gives a dramatic example of a scenario in which
two measuring instruments have widely different precisions, and one of them is
chosen at random to measure a quantity of interest.
Cox argues that the
relevant measure of uncertainty is that of the instrument actually used, not
that obtained by marginalizing over the choice of the instrument.
In other words,
the analysis should be conditional on the instrument actually chosen. This is an
illustration of the conditionality principle \citep{birnbaum1962foundations}. 
Many of the examples we gave earlier are similar in spirit to this example.
In the context of randomization-based inference, this conditioning argument leads to
valid as-if analyses, as we show in Section~\ref{subsec:conditioning}. An
important complication, explored in Section~\ref{subsec:conditioning}, is that
conditional as-if analyses are only a subset of possible as-if analyses, and
while the former are guaranteed to be valid, the latter enjoy no such guarantees.

%%%%%%%%%%%%%%%%%%%%
\section{Conditional as-if analyses}
\label{subsec:conditioning}
%%%%%%%%%%%%%%%%%%%%

\subsection{Conditional design maps}\label{subsec:cond_des_map}
%%%%%%%%%%%%%%%%%%%%

We define a conditional as-if analysis as an analysis conducted with a conditional design map as defined below.
\begin{definition}\label{def:cond-design-map}[Conditional design map]
  Consider an experiment with design $\etaref$. Take any function
  $w: \Zset_{\etaref} \rightarrow \Omega$, for some set $\Omega$, and for $\upsilon \in \Omega$ define the design
  $\eta_{\upsilon} \in \mathcal{D}$ as
  $\eta_\upsilon(\bz) = \text{pr}_{\eta_0}\{\bz \mid w(\bz) = \upsilon\}$.
  That is, $\eta_{\upsilon}$ is the design which conditions on $w(\bZ) = \upsilon$.
  Then a function
  $H: \Zset_{\etaref} \rightarrow \mathcal{D}$ where
  $H(\bZ) = \eta_{w(\bZ)}$ is called a conditional design map (where here we condition on the observed value $w(\bZ)$).
\end{definition}
It is easy to verify that a conditional design map also satisfies
Definition~\ref{def:design_mapping}.
For $\bz \in \Zset_{\etaref}$, $H(z)$ is a design, not the probability of $\bz$ under some design.
The probability of any assignment $\bz' \in \Zset_{\etaref}$ under design $H(\bz)$ is
$H(\bz)(\bz') = \text{pr}_{\eta_0}\{\bZ =\bz' \mid w(\bZ) = w(\bz)\}$, where the probability is that induced by $\eta_0$. We introduced the shorthand
$H(\bZ)(\bz) = \eta_{w(\bZ)}(\bz)$ in order to ease the notation.  

For an alternative perspective on conditional design maps, notice that any
function $w: \Zset_{\etaref} \rightarrow \Omega$ induces a partition 
$\mathcal{P}_w = \{\Zset_{\upsilon}\}_{\upsilon \in \Omega}$ of the support
$\Zset_{\etaref}$, where
$\Zset_{\upsilon} = \{\bz \in \Zset_{\etaref}: w(\bz) = \upsilon\}$.
The corresponding conditional design map would then, for a given $\bZ$, restrict and renormalize the original $\etaref$ to the $\Zset_{\upsilon}$ containing $\bZ$.
An important note is that the mapping function $w$, and therefore the partitioning of the assignment space, must be fixed before observing the treatment assignment.

\setcounter{example}{\value{ex1}}
\begin{example}[cont.]
%\continues{example:bern-crd}
  The design map $H(\bZ)$ in Example~\ref{example:bern-crd} is a conditional
  design map, with $w(\bZ) = N_1(\bZ)$. Here we partition the assignments by the
  number of treated units.
\end{example}
\setcounter{example}{\value{ex2}}
\begin{example}[cont.]%[continues = example:block]
  The design map $H(\bZ)$ in Example~\ref{example:block} is a conditional design
  map, with $w(\bZ) = N_{1,block}(\bZ)$. Here we partition the assignments by
  the vector of the number of treated units in each block.
\end{example}
While Definition~\ref{def:cond-design-map} implies
Definition~\ref{def:design_mapping}, the converse is not true: some design maps
are not conditional. For instance, the design maps we consider in
Example~\ref{example:rerand} and Example~\ref{example:crd} are not conditional, as will be discussed in
Section~\ref{subsec:nonconditional}. We can now state our main validity result.

\begin{theorem}\label{corr:main_cond}
  Consider a design $\etaref$ and a function
  $w: \Zset_{\etaref} \rightarrow \Omega$.
  Then an oracle procedure built with the conditional design map
  $H(\bZ) = \eta_{w(\bZ)}$ is $\etaref$-valid.
\end{theorem}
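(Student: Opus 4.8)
The plan is to condition on the cell of the partition $\mathcal{P}_w$ that contains the observed assignment and then reduce, cell by cell, to the ``analyze the way you randomize'' guarantee of Proposition~\ref{prop:nulldesignmap}. Since $\Zset_{\etaref}\subseteq\{0,1\}^N$ is finite, the image $\Omega_0 = w(\Zset_{\etaref})$ is finite, so no measure-theoretic subtleties arise and I would start from the exact decomposition
\begin{equation*}
  \text{pr}_{\etaref}\{\tau \in \Cz{H(\bZ)}\} = \sum_{\upsilon \in \Omega_0} \text{pr}_{\etaref}\{w(\bZ) = \upsilon\}\;\text{pr}_{\etaref}\{\tau \in \Cz{H(\bZ)} \mid w(\bZ) = \upsilon\}.
\end{equation*}
Because the weights are nonnegative and sum to one, it then suffices to show, for a fixed level $\gamma$ and each $\upsilon \in \Omega_0$, that $\text{pr}_{\etaref}\{\tau \in \Cz{H(\bZ)} \mid w(\bZ) = \upsilon\} \geq \gamma$.

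To handle a single cell I would proceed in three short steps. First, on the event $\{w(\bZ)=\upsilon\}$ the design map is constant, $H(\bZ) = \eta_\upsilon$, so there the random interval $\Cz{H(\bZ)}$ coincides with the fixed-design oracle interval $\Cz{\eta_\upsilon}$ of Equation~\eqref{eq:const_map}, whose endpoints use the (constant) $\alpha$-quantiles $\Uz{\eta_\upsilon},\Lz{\eta_\upsilon}$ of $\hat\tau(\bZ)-\tau$ under $\eta_\upsilon$; hence $\text{pr}_{\etaref}\{\tau \in \Cz{H(\bZ)} \mid w(\bZ)=\upsilon\} = \text{pr}_{\etaref}\{\tau \in \Cz{\eta_\upsilon} \mid w(\bZ)=\upsilon\}$. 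Second, by Definition~\ref{def:cond-design-map} the conditional law of $\bZ$ given $w(\bZ)=\upsilon$ under $\etaref$ is exactly $\eta_\upsilon$; since the potential outcomes (and hence $\tau$) are fixed and the endpoints of $\Cz{\eta_\upsilon}$ are a fixed function of $\bZ$, this conditional coverage equals the coverage of $\Cz{\eta_\upsilon}$ computed under $\eta_\upsilon$ itself, namely $\text{pr}_{\eta_\upsilon}\{\tau \in \Cz{\eta_\upsilon}\}$. Third, apply Proposition~\ref{prop:nulldesignmap} with $\eta_\upsilon \in \mathcal{D}$ playing the role of the original design: $\Cz{\eta_\upsilon}$ is $\eta_\upsilon$-valid, so $\text{pr}_{\eta_\upsilon}\{\tau \in \Cz{\eta_\upsilon}\} \geq \gamma$. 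Chaining these equalities and the inequality gives the per-cell bound, and summing against the weights finishes the argument.

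The content is entirely in the reduction rather than any computation, and the step I would be most careful about is the second one, where ``analyzing as if the design were $\eta_\upsilon$'' must be shown to produce literally the same object whose coverage Proposition~\ref{prop:nulldesignmap} controls. Concretely, one has to keep $\hat\tau$, the quantile functionals $\Uz{\cdot},\Lz{\cdot}$, and the estimand $\tau$ all anchored to the one fixed vector of potential outcomes, changing only the assignment distribution from ``$\etaref$ restricted to the cell'' to $\eta_\upsilon$; these two distributions are equal by construction, so the equality of coverage probabilities is immediate once the objects are lined up correctly. A degenerate case worth checking as a sanity test is a singleton cell $\Zset_\upsilon=\{\bz\}$: then $\eta_\upsilon$ is a point mass, $\hat\tau(\bZ)-\tau$ is deterministic, its upper and lower $\alpha$-quantiles both equal $\hat\tau(\bz)-\tau$, and $\Cz{\eta_\upsilon}$ collapses to $\{\tau\}$, which still covers $\tau$, consistent with Proposition~\ref{prop:nulldesignmap}.
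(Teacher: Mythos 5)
Your proposal is correct and follows essentially the same route as the paper's own proof: both decompose the marginal coverage via the law of total probability over the partition induced by $w$, identify the conditional law of $\bZ$ given $w(\bZ)=\upsilon$ under $\etaref$ with $\eta_\upsilon$, and invoke Proposition~\ref{prop:nulldesignmap} cell by cell before summing the weights. The paper simply writes this as one chained double sum over assignments, whereas you phrase it as a per-cell conditional coverage bound, but the content is identical.
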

Proof of Theorem~\ref{corr:main_cond} is provided in appendix~\ref{append:sec_proof}. In fact, the intervals obtained are not
just valid marginally; they are also conditionally valid within each
$\Zset_{\upsilon} \in \mathcal{P}_w$, in the sense that
\begin{equation*}
  \text{pr}_{\etaref}\{\tau \in \Cz{H(\bZ)}\mid w(\bZ)=\upsilon\} = \gamma
\end{equation*}
for any $\upsilon \in \Omega$.
Conditional validity implies unconditional validity because if we have valid inference for each partition of the assignment space then we will have validity over all partitions; similar arguments regarding conditional inference have been made previously \citep[for example,][states unconditional coverage of conditional intervals with correct conditional coverage as trivial on page 84]{dawid1991fisherian}.
However, this result is novel in the randomization-based causal inference framework, to our knowledge.
Conditional validity is good: it implies increased relevance, at least with respect to
function $w$. We discuss this connection more in appendix~\ref{append:simp_rel} in the context of betting games.
\begin{corollary}\label{example:partition-2}
Let $\mathcal{P} = \{\Z_{(1)}, \ldots, \Z_{(K)}\}$ be a partition of the set of all possible assignments, $\Z_{\etaref}$ and
	\begin{equation*}
		w(\bZ) =\sum_k^K k \iv\{\bZ \in \Z_{(k)}\}.
	\end{equation*}
	Then $w: \Z_{\etaref} \rightarrow \{1,\dots,K\}$ indexes the partition that $\bZ$ is in. Using $H(\bZ) = \eta_{w(\bZ)}$ gives an $\etaref$-valid procedure, as a consequence of Theorem~\ref{corr:main_cond}.
\end{corollary}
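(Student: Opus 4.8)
The plan is to recognize this corollary as an immediate specialization of Theorem~\ref{corr:main_cond}, so that the work reduces to checking that the displayed $w$ is a legitimate conditioning function. First I would verify that $w: \Z_{\etaref} \rightarrow \{1,\dots,K\}$ is well defined: because $\mathcal{P} = \{\Z_{(1)}, \ldots, \Z_{(K)}\}$ is a partition of $\Z_{\etaref}$, every assignment $\bz \in \Z_{\etaref}$ lies in exactly one block $\Z_{(k)}$, so exactly one indicator $\iv\{\bz \in \Z_{(k)}\}$ is nonzero and the sum collapses to that single index $k$. Hence $w$ is single-valued with codomain $\Omega = \{1,\dots,K\}$, and $w(\bz) = k$ if and only if $\bz \in \Z_{(k)}$; that is, $w$ indexes the block containing $\bz$, as claimed.

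Next I would observe that $H(\bZ) = \eta_{w(\bZ)}$, with $\eta_\upsilon(\bz) = \text{pr}_{\eta_0}\{\bz \mid w(\bz) = \upsilon\}$, is by construction a conditional design map in the sense of Definition~\ref{def:cond-design-map}: conditioning on the event $\{w(\bZ) = k\}$ is the same as conditioning on $\{\bZ \in \Z_{(k)}\}$, so $\eta_k$ is simply $\etaref$ restricted and renormalized to the block $\Z_{(k)}$, and the partition $\mathcal{P}_w$ induced by $w$ coincides with $\mathcal{P}$. Theorem~\ref{corr:main_cond} then applies verbatim with this choice of $w$, yielding directly that the oracle procedure $\Cz{H(\bZ)}$ is $\etaref$-valid; moreover, by the conditional-validity statement accompanying that theorem, the procedure also achieves exact coverage within each block $\Z_{(k)}$.

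There is essentially no obstacle beyond the bookkeeping in the two steps above. The only points requiring a word of care are that the blocks $\Z_{(k)}$ should be taken nonempty (if some $\Z_{(k)} = \emptyset$ the value $k$ is never attained and can simply be omitted from the index set) and that the partition $\mathcal{P}$, and hence $w$, must be fixed prior to observing $\bZ$ — exactly the requirement imposed in Definition~\ref{def:cond-design-map} and Definition~\ref{def:design_mapping} — both of which hold under the hypothesis that $\mathcal{P}$ is a partition of $\Z_{\etaref}$ specified in advance.
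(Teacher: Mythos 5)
Your proof is correct and follows essentially the same route as the paper's: verify that $w$ is a well-defined indexing function for the partition, note that $\eta_{w(\bZ)}$ is just $\etaref$ restricted and renormalized to the block containing $\bZ$ (which the paper writes out explicitly as $P_{\etaref}(\bZ \mid w(\bZ)=k) = \etaref(\bZ)/\etaref(\Z_{(k)})$ for $\bZ \in \Z_{(k)}$), and then invoke Theorem~\ref{corr:main_cond}. Your added remarks about nonempty blocks and fixing $\mathcal{P}$ in advance are sensible bookkeeping but do not change the argument.
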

Details are provided in appendix~\ref{append:sec_proof}.
Corollary~\ref{example:partition-2} states that any partition $\mathcal{P}$
of the support $\Zset_{\eta_0}$ induces a valid oracle confidence procedure;
having observed assignment $\bZ$, one simply
needs to identify the unique element $\Zset \in \mathcal{P}$ containing
$\bZ$ and construct an oracle interval using the design
obtained by restricting $\eta_0$ to the set $\Zset$.

An additional benefit of using conditional design maps is replicability.
Consider Example~\ref{example:bern-crd}, and the corresponding discussion of relevance with Bernoulli designs in Section~\ref{subsec:relevance}.
Under the original analysis for the Bernoulli design we would expect that the estimates for the bad randomizations with an extreme proportion of treated units will be far from the truth, on average.
But if we do not adjust the estimated precision of our estimators based on this information, we may not only have an estimate that is far from the truth but our confidence intervals will imply confidence in that poor estimate.
Although our conditional analysis will cover the truth the same proportion of the time as the original analysis, we would expect the length of our confidence intervals to reflect less certainty when we have a poor randomization.
In terms of replicability, this means that we are less likely to end up being confident in an extreme result.

%To make this point clearer, imagine multiple research teams investigating a certain null hypothesis, which happens to be true.
%Under a conditional or non-conditional analysis, we expect that $\gamma\cdot$100\% of the time the teams will not be able reject the null hypothesis.
%In the age of p-value driven publication, unfortunately we are likely to only see published results on the $(1-\gamma)\cdot$100\% of the time that a team falsely rejects the null hypothesis.
%Although this can happen under either analysis, the conditional design will give wider intervals to a ``bad'' randomization, whereas the unconditional design will not.
%Therefore, under the unconditional analysis the researchers will not only falsely reject the null but they may do so with a very small p-value (or, equivalently, a very narrow confidence interval) along with an extreme estimate, which may lend unmerited credibility to their result.

\subsection{Non-conditional design maps}
\label{subsec:nonconditional}
% - - -
Theorem~\ref{corr:main_cond} states that a sufficient condition for an
as-if procedure to be valid is that it be a conditional as-if procedure.
Although this condition is not necessary, we will now show that some
non-conditional as-if analyses can have arbitrarily poor properties.
Example~\ref{example:rerand}, in particular, provides a sharp illustration of this phenomenon
and, although it is an edge case, it helps build intuition for why some design
maps are not valid.

\setcounter{example}{\value{ex3}}
\begin{example}[cont.]%[continues = example:rerand]
  The design map $H(\bZ) = \text{pr}_{\etaref}\{\bz \mid \bz \in \mathcal{A}(\bZ)\}$
  introduced in Example~\ref{example:rerand} is not a conditional design
  map. This can be seen by noticing that the sets
  $\{\mathcal{A}(\bZ)\}_{\bZ\in \Zset_{\etaref}}$ where
  $\mathcal{A}(\bZ) = \{\bz: |\Delta_X(\bz)| \leq | \Delta_X(\bZ)|\}$ do not
  form a partition of $\Zset_{\etaref}$.
\end{example}
This example is particularly deceptive because the design map $H(\bZ)$ does
involve a conditional distribution. And yet, it is not a conditional design
map in the sense of Definition~\ref{def:cond-design-map} because it does not
partition the space of assignments; each assignment $\bz$, except for the
assignments with the very worst balance, will belong to multiple $\mathcal{A}(\bZ)$.
Therefore Theorem~\ref{corr:main_cond} does not apply.

Consider the special case where covariates $X_i$ are drawn from a continuous distribution and $Y_i(0) = Y_i(1) = X_i$ ($i=1, \ldots, N$)\footnote{If $X_i$ were instead low-dimensional and discrete, we could easily condition on covariate balance. This would be equivalent to post-stratification. But with many covariates or continuous covariates, post-stratification can be infeasible.}.
We are interested in the average treatment effect, which is the difference in mean potential outcomes under treatment versus control.
Suppose that assignments are balanced such that half the units are assigned to treatment and half are assigned to control.
Then given any $\bz \in \Zset_{\eta_0}$, with probability one there are only two assignments with exactly the same value for $|\Delta_X(\bz)|$, $\bz$ and the assignment $1-\bz$; see appendix~\ref{append:sec_proof} for proof of this statement.
By construction, then, our assignment is one of two worst case assignments in terms of balance for the set $\mathcal{A}(\bZ)$.
Under the model $Y_i(0) = Y_i(1) = X_i$, the observed difference, $\hat{\tau}(\bZ) - \tau = \hat{\tau}(\bZ)$ will be the most extreme in $\mathcal{A}(\bZ)$ and thus $\tau$ would lie outside the oracle confidence interval if, as is typical, the set is large enough with $|\mathcal{A}(\bZ)| \geq 2/\gamma$, where $|\mathcal{A}(\bZ)|$ is the size of set $\mathcal{A}(\bZ)$.
Thus this design map would lead to poor coverage.
In fact, we show in appendix~\ref{append:sec_proof} that if we instead make the inequality strict and take $\mathcal{A}(\bZ) = \{\bz: |\Delta_X(\bz)| < | \Delta_X(\bZ)|\}$, the as-if procedure of Example~\ref{example:rerand} has a coverage of $0$.
Intuitively, this is because the observed assignment $\bZ$ always has the worst covariate balance of all assignments within the support $\mathcal{A}(\bZ)$. % of the conditional randomization distribution. 
Although extreme, this example illustrates the fact that as-if analyses are not guaranteed to be valid if they are not conditional. 

\setcounter{example}{\value{ex4}}
\begin{example}[cont.]%[continues = example:rerand]
  The design map
  introduced in Example~\ref{example:crd}, in which we analyze a blocked design as if it were completely randomized, is also not a conditional design
  map. This can be seen by noticing that the complete randomization does not partition the blocked design but rather the blocked design is a subset, or a single element of a partition, of the completely randomized design.
\end{example}
This implies that Example~\ref{example:crd} can also lead to invalid analyses; if the blocked design originally used is a particularly bad partition of the completely randomized design, in the sense of having wider conditional intervals, we will not have guaranteed validity using a completely randomized design for analysis.
See \cite{pashley2020bkcr} for further discussion on when a blocked design can result in higher variance of estimators than an unblocked design. 
This is a tricky case as typically analyzing a blocked experiment as completely randomized will lead to conservative estimation (due to inflated estimation of variance), but such a result is not guaranteed.

\subsection{How to build a better conditional analysis}
%%%

The original goal of the as-if analysis of Example~\ref{example:rerand} was
to incorporate the observed covariate balance in the analysis to increase
relevance. We have shown that the design map originally proposed was not a
conditional design map. We now show how to construct a conditional design map,
and therefore a valid procedure, for this problem. The idea is to partition the
support $\Zset_{\etaref}$ into sets of assignments with similar covariate
balance and then use the induced conditional design map, as prescribed by
Corollary~\ref{example:partition-2}.
Let $\{\Delta_X(\bz): \bz \in \Z_{\etaref}\}$ be the set of all possible
covariate imbalance values achievable by the design $\etaref$, and 
$\mathcal{G} = \{\mathcal{G}_{(1)}, \ldots, \mathcal{G}_{(K)}\}$ be a partition
of that set into $K$ ordered elements. That is, for
any $k, k'$ with $k<k'$, we have $\delta < \delta'$  for all
$\delta \in \mathcal{G}_{(k)}, \, \delta' \in \mathcal{G}_{(k')}$.
This induces a partition 
$\mathcal{P} = \{\Z_{(1)}, \ldots, \Z_{(K)}\}$ of $\Z_{\etaref}$, where
\begin{equation*}
	\Z_{(k)} = \{\bz \in \Zset_{\etaref}: \Delta_X(\bz) \in \mathcal{G}_{(k)}\}.
\end{equation*}
Now we can directly apply the results of Corollary~\ref{example:partition-2}. This
approach is similar in spirit to the conditional randomization tests proposed
by \cite{rosenbaum1984conditional}; see also \cite{branson2019randomization} and
\cite{hennessy2016conditional}. The resulting as-if analysis improves on the
original analysis under $\etaref$ by increasing its relevance. Indeed, suppose
that the observed assignment has covariate balance $\delta$. Then the confidence
interval constructed using $\etaref$ will involve all of the assignments in
$\Zset_{\etaref}$, including some whose covariate balances differ sharply from
$\delta$. In contrast, the procedure we just introduced restricts the randomization
distribution to a subset of assignments $\Zset_{(k)}$ containing only assignments
with balance close to $\delta$.

This does not, however, completely solve the original problem. Suppose, for
instance, that by chance, $\Delta_{X}(\bZ) = \max \mathcal{G}_{(k)}$. By
definition, the randomization distribution of the as-if analyses we introduced
above will include the assignment $\bz$ such that
$\Delta_{X}(\bz) = \min \mathcal{G}_{(k)}$, but not $\bz^*$ such that
$\Delta_{X}(\bz^*) = \min \mathcal{G}_{(k+1)}$ even though $\bz^*$ might be more
relevant to $\bZ$ than $\bz$, in the sense that we may have
$|\Delta_{X}(\bZ) - \Delta_{X}(\bz^*)| \leq |\Delta_{X}(\bZ) - \Delta_{X}(\bz)|$.
This issue does not affect validity, but it raises concerns about relevance when the
observed assignment is close to the boundary of a set $\Zset_{(k)}$. 
Informally, we would like to choose $\Zset_{(k)}$ in such a way that 
the observed assignment $\bZ$ is at the center of the set, as measured by
covariate balance. For instance, fixing $c > 0$, we would like to construct an
as-if procedure that randomizes within a set of the form
$\mathcal{B}(\bZ) = \{\bz: \Delta_X(\bz) \in
[\Delta_X(\bZ)-c, \Delta_X(\bZ)+c]\}$, rather than $\Zset_{(k)}$. A naive
approach would be to use the design mapping
$H: \bZ \rightarrow \text{pr}_{\etaref}\{\bz \mid \bz \in \mathcal{B}(\bZ)\}$,
but this is not a conditional design mapping. \cite{branson2019randomization}
discussed a similar approach in the context of randomization tests and also noted
that it was not guaranteed to be valid.

Let's explore further why $H: \bZ \rightarrow \text{pr}_{\etaref}\{\bz \mid \bz \in \mathcal{B}(\bZ)\}$ does not have guaranteed validity.
In this case, each assignment vector has an interval or window of acceptable covariate balances centered around it.
The confidence interval for a given $\bZ \in \Z_{\etaref}$ is guaranteed to have $\gamma \cdot 100$\% coverage over all assignments within the window of covariate balances defined by set $\mathcal{B}(\bZ)$.
So, if we built a confidence interval for each assignment in $\mathcal{B}(\bZ)$, using the design conditioning on $\mathcal{B}(\bZ)$, $\gamma \cdot 100$\% of those intervals would cover the truth.
    However, we would only ever observe these intervals for those assignments with exactly the same covariate balance as $\bZ$; other assignments would get mapped to different $B(\bZ')$.
    Furthermore, there are no guarantees about which assignments will result in a confidence interval covering the truth. 
    Over the smaller subset of assignments with exactly the same covariate balance as $\bZ$, which lead to the same design over $\mathcal{B}(\bZ)$, the coverage may be less than $\gamma \cdot 100$\%.

To build a solution with guaranteed validity, we need more flexible tools.
The following section will discuss how we can be more flexible, while still guaranteeing validity, by introducing some randomness.

%%%%%%%%%%%%%%%%%%%%%%%%%%%%%%%%%%%%%%%%%%%%%%%%%%%%%%%%%%%%%%%%%%%%%%%%%%%%

%%%%%%%%%%%%%%%%%%%%
\section{Stochastic conditional as-if}
\label{sec:ex}
%%%%%%%%%%%%%%%%%%%%
\subsection{Stochastic design maps}\label{sec:stoch_dm}
The setting of Example~\ref{example:rerand} posed a problem of how to build valid procedures that allow the design mapping to vary based on the assignment.
That is, we want to avoid making a strict partition of the assignment space but still guarantee validity.
We can do this by introducing some randomness into our design map.
\begin{definition}\label{def:stochastic_mapping}[Stochastic conditional design map]
  Consider an experiment with design $\etaref$. For observed assignment
  $\bZ \in \Z_{\etaref}$, draw an additional bit of randomness $w \sim m(w \mid \bZ)$ from a given distribution
  $m( \cdot \mid \bZ)$, indexed by $\bZ$ and with support on some set $\Omega$, and consider the design
  \begin{equation*}
    \eta_{w}(\bz) = \text{pr}_{\etaref}\{\bz \mid w\} \propto m(w \mid \bz) \text{pr}_{\etaref}\{\bz\}.
  \end{equation*}
  The mapping $H: \bZ \rightarrow \mathcal{D}$, with $H(\bZ) = \eta_{w}$ and $w \sim m(w \mid \bZ)$, is
  called a stochastic design mapping.
\end{definition}
This $w$ is our bit of randomness that will allow us to blend our conditional maps to regain validity.
In the special case where the distribution $m(w \mid \bZ)$ degenerates into $\delta_{w = w(\bZ)}= \iv(w = w(\bZ))$,
Definition~\ref{def:stochastic_mapping} is equivalent to
Definition~\ref{def:cond-design-map}. When $m$ is non-degenerate, the
stochastic design map $H$ becomes a random function.

Before stating our theoretical result for stochastic design maps, we first
examine how the added flexibility that these maps afford can be put to use in the
context of Example~\ref{example:rerand}. 
Let $c > 0$, $\Omega = \mathbb{R}$, and define
\begin{equation*}
  m(w \mid \bZ)
  = \text{Unif}\bigg(w' : |\Delta_X(\bZ) - w'| \leq c\bigg)
  = \text{Unif}\bigg(\Delta_X(\bZ) - c, \Delta_X(\bZ) + c\bigg).
\end{equation*}
Our $m(w \mid \bZ)$ selects a $w^{obs}$ near the observed imbalance.
Having observed $\bz \in \Zset_{\etaref}$ and drawn $w^{obs} \sim m(w \mid \bZ=\bz)$, we then
consider the design
\begin{equation*}
  H(\bZ) = \text{pr}_{\etaref}\{\bz \mid w^{obs}\} =
  \begin{cases}
    \frac{\text{pr}_{\etaref}\{\bz\}}{\nu(w^{obs})} & \text{ if } \Delta_X(\bz) \in [w^{obs} - c, w^{obs} + c] \\
    0 & \text{ otherwise, } 
  \end{cases}
\end{equation*}
with normalizing factor $\nu(w) = \text{pr}_{\etaref}\{w\} / m(w \mid \bZ) = 2c \text{pr}_{\etaref}\{w\}$ (with some adjustment due to truncation at the extremes). In other
words, we analyze the experiment by restricting the randomization to
a set $\mathcal{A}(w) = \{\bz: \Delta_X(\bz) \in [w-c, w+c]\}$.
Comparing $\mathcal{A}(w)$ to our original randomization set $\mathcal{B}(\bZ)$,
we see that while $\mathcal{B}(\bZ)$ is a set of $w$ imbalances centered on observed $\Delta_X(\bZ)$, $\mathcal{A}(w)$ is only centered on $\Delta_X(\bZ)$ on average over draws of $w$.
The following theorem guarantees that this stochastic procedure is valid. The proof is in appendix~\ref{append:sec_proof}.
\begin{theorem}\label{theorem:main_cond}
  Consider a design $\etaref$ and a variable $w$, with conditional distribution
  $m(w \mid \bZ)$. Then an oracle procedure built at level $\gamma$ with the stochastic conditional design map $H(\bZ)$, which draws $w^{obs}$ and maps to
  $\eta_{w^{obs}} = \text{pr}_{\etaref}\{\bz \mid w^{obs}\}$, is $\etaref$-valid at level
  $\gamma$.
\end{theorem}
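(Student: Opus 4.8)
The plan is to reduce Theorem~\ref{theorem:main_cond} to Theorem~\ref{corr:main_cond} by augmenting the probability space. The key observation is that the stochastic design map introduces auxiliary randomness $w^{obs} \sim m(\cdot \mid \bZ)$, but if we view the pair $(\bZ, w^{obs})$ as living on an enlarged sample space, then conditioning on $w^{obs}$ is an ordinary (deterministic) conditioning operation on that enlarged space. So the strategy is: (1) define an augmented design on $\Zset_{\etaref} \times \Omega$; (2) observe that $\eta_{w^{obs}} = \mathrm{pr}_{\etaref}\{\bz \mid w^{obs}\}$ is exactly the conditional design map on the augmented space with conditioning function $(\bz, w) \mapsto w$; (3) apply Theorem~\ref{corr:main_cond} to get validity on the augmented space; and (4) check that validity on the augmented space, marginalizing back out $w^{obs}$, gives $\etaref$-validity in the original sense.

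\textbf{Step 1: Construct the augmented design.} Define $\tilde{\eta}_0$ on $\Zset_{\etaref} \times \Omega$ by $\tilde{\eta}_0(\bz, w) = \etaref(\bz)\, m(w \mid \bz)$. This is a valid design (it integrates/sums to one), and its $\bZ$-marginal is exactly $\etaref$. The estimand $\tau$ and the estimator $\hat{\tau}(\bZ)$ depend only on the $\bZ$-coordinate, so all the machinery of Section~\ref{subsec:validityset} carries over verbatim to $\tilde{\eta}_0$: in particular, the oracle interval built from a design on the augmented space that only ever feeds $\bZ$ into $\hat\tau$ is well-defined.

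\textbf{Step 2: Identify the stochastic map as a conditional map on the augmented space.} Take the conditioning function $\tilde{w}: \Zset_{\etaref} \times \Omega \to \Omega$, $\tilde{w}(\bz, w) = w$. The conditional design map on the augmented space is $\tilde{H}(\bz, w) = \tilde\eta_0(\cdot \mid \tilde w = w)$, which is a distribution over pairs; its $\bZ$-marginal at value $w$ is, by Bayes, $\mathrm{pr}_{\tilde\eta_0}\{\bz \mid \tilde w = w\} \propto m(w \mid \bz)\,\etaref(\bz)$ — precisely $\eta_w$ from Definition~\ref{def:stochastic_mapping}. Since $\hat\tau$ only sees the $\bZ$-coordinate, the oracle interval built from $\tilde H(\bZ, w^{obs})$ coincides with $\Cz{\eta_{w^{obs}}}$, the interval the theorem is about. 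Now Theorem~\ref{corr:main_cond}, applied with design $\tilde\eta_0$ and function $\tilde w$, yields $\mathrm{pr}_{\tilde\eta_0}\{\tau \in \Cz{\eta_{w^{obs}}}\} \geq \gamma$, and in fact equality at level $\gamma$ by the conditional-validity refinement stated after that theorem.

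\textbf{Step 3: Transfer back to the original space.} The event $\{\tau \in \Cz{\eta_{w^{obs}}}\}$ is an event on the augmented space, and $\mathrm{pr}_{\tilde\eta_0}$ of it is exactly the probability, under the original design together with the draw of $w^{obs}$, that the stochastic procedure covers $\tau$ — which is the meaning of $\etaref$-validity for a stochastic procedure. So we are done. \textbf{The main obstacle} is mostly bookkeeping rather than a genuine difficulty: one must be careful that $w^{obs}$ is drawn from $m(\cdot \mid \bZ)$ and not from some marginal, so that the joint law on $(\bZ, w^{obs})$ really is $\tilde\eta_0$ and the conditional $\eta_w$ comes out with the $m(w\mid\bz)\,\etaref(\bz)$ weighting rather than $m(w)\,\etaref(\bz)$; and one must confirm that ``feeding only $\bZ$ into $\hat\tau$'' makes the augmented-space oracle quantiles agree with the ones defining $\Cz{\eta_{w^{obs}}}$ in the theorem statement. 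A secondary technical point is handling the truncation of the uniform kernel $m$ near the extremes of the imbalance range, but this only affects the normalizing constant $\nu(w)$ and not the validity argument, which never uses the specific form of $m$.
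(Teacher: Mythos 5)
Your proposal is correct and is essentially the paper's own argument in different packaging: the paper proves the theorem by directly writing $\text{pr}_{\etaref,m}\{\tau \in \Cz{H(\bZ)}\} = \int p(w)\,\text{pr}_{\eta_w}\{\tau \in \Cz{\eta_w}\}\,dw \geq \gamma$, i.e., it conditions on the drawn $w$, applies Proposition~\ref{prop:nulldesignmap} within each slice, and then marginalizes over $w$ --- which is exactly what your augmented-space reduction unrolls to once you expand the proof of Theorem~\ref{corr:main_cond}. The only caveat is that Theorem~\ref{corr:main_cond} as literally stated concerns designs supported in $\{0,1\}^N$ with the conditioning variable a deterministic function of $\bz$ (and its proof is a discrete sum over $\Omega$), so invoking it on $\Zset_{\etaref}\times\Omega$ with possibly continuous $\Omega$ requires the routine extension you gesture at in Step~1 rather than a literal citation; the paper sidesteps this by simply redoing the short computation.
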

Stochastic conditional design maps mirror conditioning mechanisms introduced by \cite{basse2018exact} in the context of randomization tests. 
Inference is also stochastic here in the sense that a single draw of $w$ determines the final reference distribution to calculate the confidence interval.
We discuss next discuss a way to aggregate results from multiple draws of $w$.

%%%%%%%%%%%%%%%%%%%%%
\subsection{Aggregating random confidence intervals}\label{append:fuzzy}
%%%%%%%%%%%%%%%%%%%%%
The randomness intrinsic to the method introduced in this section is similar to the introduction of randomness
into uniformly most powerful (UMP) tests \citep[see][]{Lehmann2005}. Instead of having one's
inference depend on a single draw of $w$, one can use fuzzy confidence intervals
\citep[see][]{geyer2005fuzzy} to marginalize over the distribution $m$.
In a fuzzy interval, similar to fuzzy sets, membership in the interval is not binary but rather allowed to take values in $[0,1]$.

Consider an oracle confidence procedure $C(\hat{\tau}(\bZ); H(Z))$ built with a
stochastic conditional design map $H$.
The observed confidence interval for a fixed, observed $\bz$, $C(\bz) = C(\hat{\tau}(\bz); H(\bz))$, is still a random variable because $H$ is stochastic.
Specifically, $C(\bz) = C(\bz,w)$ depends on the draw $w \sim m(w \mid \bz)$, so for
a given observed dataset, we obtain a distribution $P(C(\bz,w))$ of confidence intervals,
where the randomness is that induced by $w \sim m(w \mid \bz)$. One useful way to summarize
this information is to construct a fuzzy confidence interval \citep{geyer2005fuzzy},
\begin{equation*}
  I(\theta; \bZ) = P(\theta \in C(\bZ, w)),
\end{equation*}
where $P$ is with respect to $w \sim m(w \mid \bZ)$. For any $\theta$, the value
$I(\theta; \bz)$ is the fraction of the random confidence intervals $C(\bz, w)$,
for a fixed $\bz$, that contain $\theta$; in particular, $I(\theta; \bz) \in [0,1]$.
A value of $1$ represents values of $\theta$ that are covered by all the random intervals,
while a value of $zero$ represents values of $\theta$ covered by none. This approach has the
benefit of summarizing all the information we get into a single function.

% %%%%%%%%%%%%%%%%%%%%

%%%%%%%%%%%%%%%%%%%%%%%%%%%%%%%%%%%%%%%%%%%%%%%%%%%%%%%%%%%%%%%%%%%%%%%%%%

%%%%%%%%%%%%%%%%%%%%%

%%%%%%%%%%%%%%%%%%%%%
\section{Discussion: Implications for Matching}\label{sec:disc_mat}
The as-if framework and theory we have developed is a useful tool for evaluating a given analysis method.
As an example, we consider analyzing data matched post-randomization as if it was pair randomized.
Matching is a powerful tool for extracting quasi-randomized experiments from observational studies \citep{ho2007matching, rubin2007design, stuart2010matching}.
To highlight the conceptual difficulty with post-matching analysis, we consider the idealized setting where treatment is assigned according to a known Bernoulli randomization mechanism, $\eta_0$, and matching is performed subsequently.
Specifically, units are assigned to treatment independently with probability $p_i = logit^{-1}(\alpha_0 + \alpha_1 X_i)$, where $X_i = (X_{i,1}, X_{i,2}) \in \mathbb{R}^2$ is a unit-level covariate vector. 
For simplicity, we focus on pair matching, where treated units are paired to control units with similar covariate values.
One way to analyze the pairs is as if the randomization had been a matched pairs experiment.
Although this method of analysis has already received scrutiny in the literature \citep[see, e.g., ][]{schafer2008average, stuart2010matching}, it is worth asking: is this a conditional design map with guaranteed validity?

If we can exactly match on $X_i$, then the situation is identical to that of Example~\ref{example:block}; the as-if pair randomized design map is a conditional design map, and the procedure is therefore guaranteed to be valid.
Exact matching is, however, often hard to achieve in practice.
Instead, we generally rely on approximate matching, in which the covariate distance between the units within a pair is small, but not zero.
Unfortunately, with approximate matching, the as-if pair randomized design map is not a conditional design map.
To show this formally, let $R$ be a matching algorithm which, given an assignment and fixed covariates, returns a set of $L$ pairs which we denote $M$, $M=\{(i_{1,1}, i_{1,2}), \ldots, (i_{L,1},i_{L,2})\}$.
Assuming a deterministic matching algorithm, let $M^{obs} = R(\bz)$ be the matching obtained from an observed assignment $\bz$.
Treating the matched data as a matched pairs experiment implies analyzing over all assignment vectors that permute the treatment assignment within the pairs.
%Let $\bz^*$ be an assignment obtained by such a permutation.
A necessary condition for pairwise randomization to be a conditional procedure is that $R(\bz^*) = M^{obs}$ for all $\bz^*$ in the set of pairwise permuted assignments; that is, for any permutation of treatment within pairs, the matching algorithm $R$ must return the original matches.

This condition is not guaranteed.
To illustrate, consider the first three steps inside the light grey box in
Figure~\ref{fig:match_1}, in which we consider a greedy matching algorithm.
If we analyze the matched data as if it were a matched pairs design, then the
permutation shown is allowable by the design. However, we see in the dotted
rectangle of Figure~\ref{fig:match_1} that if we had observed that permutation
as the treatment assignment, we would have ended up matching the units differently, and therefore would have conducted a different analysis.
This is essentially the issue we encountered earlier in Section~\ref{sec:ex}; we have not created a
partition of our space.
\begin{figure}
	\centering
	\includegraphics[scale=0.58]{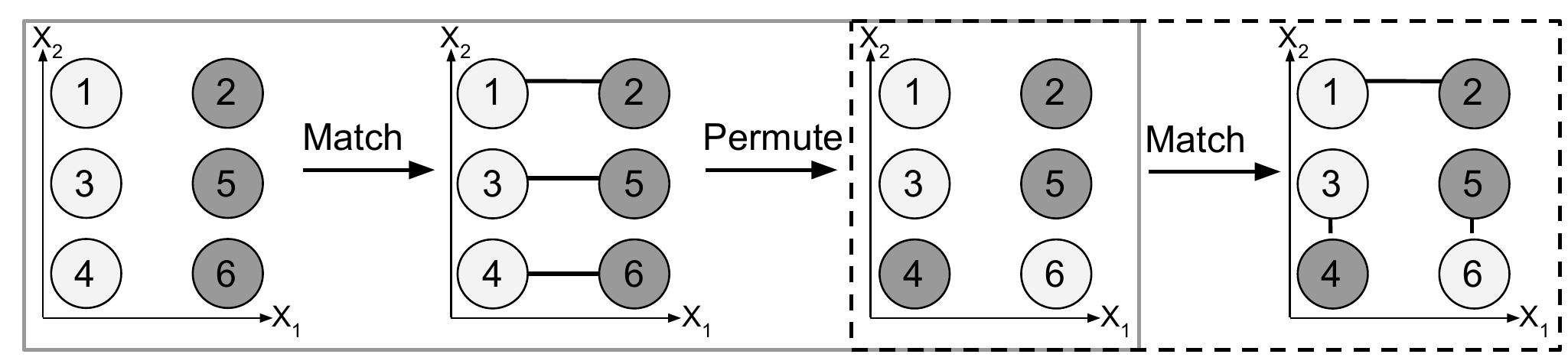}
    \caption{Units are numbered in circles. Position in the graphs corresponds to covariate values.
    Shaded circles are treated. Lines indicate matches.
    The solid edge rectangle indicates original match and one permutation.
    The  dashed edge rectangle indicates match based on that permutation.}
    \label{fig:match_1}
\end{figure}
The upshot is that when matching is not exact, analyzing the data as if it came
from a paired-randomized study cannot be justified by a conditioning argument.
A proper conditional analysis would need to take into account the matching
algorithm. Specifically, let $\mathcal{P}(M)$ be the set of all treatment assignment
vectors that are permutations of treatment within a set of matches $M$ and let
$\mathcal{V}(M)$ be the set of assignments that would lead to matches $M$ using
algorithm $R$. Then
\[\text{pr}_{\etaref}\left(\bz^* \mid R(\bz^*) = R(\bz)\right) =  \text{pr}_{\etaref}\left(\bz^* \mid \bz^* \in \mathcal{V}(R(\bz))\right).\]
Note that this is equal to $\text{Unif}\left(\mathcal{V}(R(\bz))\right)$ if assignments in $\mathcal{V}(R(\bz))$ have equal probability under the original design.
If we condition on having observed a certain matched set, we would not use all
within pair permutations of those matches, but rather would use all permutations
that would have resulted in the same matches given our matching algorithm.

Still, this distinction appears to matter more in theory than in practice.
Matching has been shown to be, in practice, a reliable tool to achieve covariate
balance such that an observational study resembles an experiment and thus gives
us hope of obtaining good inference.
Further, different inference frameworks such as those based on superpopulations with assumptions on the relationship between covariates and outcomes may provide stronger theoretical bases for matching.
See \cite{abadie2006large} for large-sample properties of matching.

\section{Discussion: Estimators, designs, and practical considerations}\label{append:est_des}
In this section we discuss some practical considerations and limitations of conditional inference and as-if analyses.
We start by discussing the impact of the choice of estimators, followed by an exploration of limitations when we do not have oracles to produce confidence intervals, and end by emphasizing difficulties regarding choices of what to condition on.

\subsection{Estimators and power}
An important consideration in practice is that the estimators we should use may change based on the design we use.
For instance, in post-stratification we would use the blocking estimator, which averages block treatment effect estimates, rather than the completely randomized simple difference estimator.
These two estimators are identical when the proportion treated within each block is the same but otherwise the completely randomized estimator will be biased for the post-stratified design.
Although this is not an issue when we have an oracle, because it will still correctly identify the distribution of $\hat{\tau}(\bZ) - \tau$ for any estimator $\hat{\tau}(\bZ)$, thus accounting for any bias, in practice we would want to use unbiased estimators.
Our validity result still holds even if we change the estimator for each different blocked design.
This validity holds because we will have conditional validity for each blocked design and therefore will have validity over the entire completely randomized design.
See proof of Theorem~\ref{corr:main_cond}, which shows that we just need conditional validity to get overall validity.
But in practice, a full comparison of a conditional and non-conditional analysis may require a comparison of relevant estimators as well.

Moreover, it is often this changing of estimators that leads to confusion about power changes for conditional analyses.
In general, a conditional analysis will not result in higher precision or smaller confidence intervals than an unconditional one, on average, if we keep our estimators the same.
This can be seen through a simple heuristic argument concerning the variance of our estimators.
Although the mode of inference in this paper does not rely strictly on the variance of our estimators, considering the variance is a proxy for considering the length of the confidence intervals.
Consider restricting randomization by conditioning on some information about the assignment given by random variable $w$.
We have the following well known variance decomposition: $\text{var}_{\etaref}(\hat{\tau}) = E_{\etaref}[\text{var}_{\etaref}(\hat{\tau}\mid w)]+\text{var}_{\etaref}(E_{\etaref}[\hat{\tau}\mid w])$.
It is easy to see that if\footnote{Note that unbiasedness of the conditional expectation is not guaranteed or required, so long as a valid analysis for the conditional design is possible. For instance, conditional unbiasedness may not hold for the conditional design map discussed in Section~\ref{sec:stoch_dm}. In these cases, we may actually see on average precision gains for the conditional designs using the same estimator as the unconditional design (if that estimator is unbiased under the unconditional design). Given the practical difficulties in conducting a valid analysis of a biased estimator, precision gains may more realistically be obtained through changes in estimator associated with the design.} $E_{\etaref}[\hat{\tau}\mid w] = \tau$ then $\text{var}_{\etaref}(\hat{\tau}) = E_{\etaref}[\text{var}_{\etaref}(\hat{\tau}\mid w)]$. 
This implies that on average, over the distribution of $w$, the conditional, or restricted, variance is equal to the unconditional, or unrestricted, variance.
This argument was made by \cite{sundberg2003conditional} under a predictive view.
\cite{sundberg2003conditional} argued for the use of conditioning for lowering the mean-squared error of the predicted squared error for variance estimation.

Of course, if we allow different estimators to be used for the conditional and unconditional designs, we do not have such guarantees and in fact may see a precision gain through changing the estimator.
For instance, in post-stratification we may see gains by using the stratified adjusted blocking estimator rather than the simple difference in means estimator.
In particular, consider two estimators $\hat{\tau}^*$ and $\hat{\tau}^\dagger$ which both have similar conditional variances such that $E_{\etaref}[\text{var}_{\etaref}(\hat{\tau}^*\mid w)] = E_{\etaref}[\text{var}_{\etaref}(\hat{\tau}^\dagger\mid w)]$.
But if $\hat{\tau}^*$ is conditionally unbiased ($E_{\etaref}[\hat{\tau}\mid w] = 0$ for all $w$) and $\hat{\tau}^\dagger$ is conditionally biased with the bias depending on $w$ ($E_{\etaref}[\hat{\tau}\mid w] = f(w)$), then we will have $E_{\etaref}[\text{var}_{\etaref}(\hat{\tau}\mid w)] = \text{var}_{\etaref}(\hat{\tau}^*) < \text{var}_{\etaref}(\hat{\tau}^\dagger)$ because $0 = \text{var}_{\etaref}(E_{\etaref}[\hat{\tau}^*\mid w])< \text{var}_{\etaref}(E_{\etaref}[\hat{\tau}^\dagger\mid w])$.

\subsection{Inference without oracles}
Inference will necessarily be more complicated without oracles.
An obvious challenge in the absence of an oracle is that we may not have good analytical control over the randomization distribution of $\hat{\tau}(\bZ) - \tau$ in the sense of being able to derive and estimate quantities such as its variance.
For instance, we may have good control over the distribution of $\hat{\tau}(\bZ) - \tau$ under $\eta_0$ but not under conditional design $\eta_{w(\bZ)}$, so that even though Theorem~\ref{corr:main_cond} guarantees the theoretical validity of this conditional as-if analysis, there is no way to implement it in practice.
Therefore, we will often want to choose an estimator that has estimable variance and is unbiasedness under the conditional distribution $\eta_{w(\bZ)}$.

A more subtle issue is that we will typically need to estimate variance in practice to construct confidence intervals and we often only have conservative estimators of the variance.
In fact, estimators of the variance may often be even more conservative for conditional analyses than unconditional analyses.
There may also be degrees of freedom losses under a conditional analysis, which can occur in post-stratification, for example.
Thus, we can find situations in which a conditional interval is smaller
than the marginal interval under the oracle, but the estimated conditional
interval is larger than the marginal conditional interval. Therefore, in
practice the theoretical benefits of conditioning must be weighed against the
drawbacks of excessively conservative variance estimation.

Three examples of conditional as-if analyses that we have identified as currently feasible without oracles are (i) analyzing a Bernoulli randomized experiment as completely randomized (Example~\ref{example:bern-crd}), (ii) analyzing a completely randomized experiment as blocked randomized (Example~\ref{example:block}), and (iii) analyzing a marginally randomized experiment with two treatments as a factorial (Example~\ref{example:fac}).
These may be simple examples but they are still pertinent; for example, online experiments (whether A/B tests run by companies or researchers on platforms such as Qualtrics) often employ Bernoulli randomization.
There are some practical limitations, even for these designs. As an example, one cannot use the standard completely randomized variance estimator for a Bernoulli assignment with 1 treated unit and 99 control units. However, even the unconditional analysis would be fraught at best in this case. Further, this would occur with low probability if assignment probability is close to 0.5. This problem becomes more salient with post-stratification, when strata can be made very fine, as discussed in the following section.

\subsection{What to condition on}

Other questions of practical importance when discussing conditioning are what and how much to condition on.
The theory indicates that more is always better; the more variables we condition on, the more relevant our analysis becomes.
But, as we condition on more information, the partition induced on the set of all assignments becomes increasingly fine.
In the extreme case, each element of the partition may contain a single assignment.
This poses a philosophical problem from the point of view of randomization-based inference because there is no randomness left!
This is not a problem under an oracle, which would give us a single point at the true $\tau$ in this case, but of course this precludes any analysis in practice.

If we cannot condition on everything, we must choose what to condition on, given
multiple options. This is analogous to the well known fact in the frequentist
literature that many ancillary statistics may exist in a given inference problem,
and that two ancillary statistics may cease to be ancillary if taken jointly
\citep{basu1964recovery,GhoshM2010ASAR}. Similarly, one must choose how fine to
make the conditioning. For instance, in our stochastic design map example of
Section~\ref{sec:ex}, one must choose the bandwidth $c$ of covariate closeness
to use. The simple but vague answer is that one should condition on the quantity,
or few quantities, that affect relevance the most, while remaining mindful of the
analytical issues described previously.
This issue of what to condition on is especially salient in the use of post-stratification, when researchers must decide what variables to use to create strata and how fine those strata should be made.
This is a fundamental statistical problem that is increasingly being highlighted in areas such as high-dimensional inference, where choices must be made in terms of which covariates to use.
Even running a simple regression requires these difficult choices to be made when the number of covariates is high compared to the number of observations.
See \cite{liu2016there}
for a more detailed discussion on the trade-offs of relevance and robustness.

\section{Conclusions}
%%%%%%%%%%%%%%%%%%%%%

In providing a justification for as-if analyses, our goal was not to
contradict the injunction ``analyze  the way you randomize'', but to complement
it. Our theory suggests that one can, and in fact should, analyze an experiment
in a way that is both compatible with the original randomization and also relevant to
the observed data. As we have argued, this is in essence how the conditionality
principle manifests itself in the context of randomization-based inference. 
Our line of argumentation in this paper has primarily been theoretical and
conceptual rather than the practical.

Although theoretically justified -- and, in many cases, even desirable -- valid as-if analyses may be impractical, as
discussed in Section~\ref{append:est_des}.
%However, we have indicated how the theory we have developed has implications for methods in practice (Section~\ref{sec:disc_mat}).
%Further, our theoretical exploration has suggested novel, potentially useful procedures (Section~\ref{sec:ex}) that we plan to explore in future work.
%It is worth noting up front that practical benefits of conditional analyses may be modest due to difficulties discussed below.
Practicalities currently limit the scope of conditional analyses to those we already have tools for, such as post-stratification and complete randomization, and make the benefits of conditional analyses more modest.
Despite these limitations, this paper contributes to the causal literature by
(i) aiding causal methodologists in critiquing and reflecting on complex inference strategies that exist (e.g., matching), (ii) providing a framework for developing new methods (e.g., helping to analyze data from more complicated randomization schemes, such as in \cite{basse2018exact}) and (iii) providing a framework for understanding why some styles of inference may be preferable (e.g., conditional standard errors for post stratification).
%However, we hope this work will increase awareness of the big benefits of conditioning in terms of relevance and encourage development of tools that allows us to realize these benefits in practice, such as development of a valid way to perform a rerandomization style analysis explored in Section~\ref{subsec:nonconditional}, to allow a non-stochastic way to flexibly condition on covariate balance.
%Further, we believe clarifying the concepts of what makes a valid as-if analysis is valuable on its own.
%This framing allows us to look critically at methods such as matching.
%Thinking of conditioning as a way of inference may also help solve more difficult problems with complicated randomization schemes \citep[e.g.,][]{basse2018exact}.

%%%%%%%%%%%%%%%%%%%%%%%%%%%%%%%%%%%%%%%%%%%%%%%%%%%%%%%%%%%%%%%%%%%%%%%%%%

\section*{Acknowledgements}
The authors would like to thank Zach Branson, Tirthankar Dasgupta, Kosuke Imai, and participants of the 2019 Atlantic Causal Inference Conference in Montreal, Quebec as well as the Miratrix C.A.R.E.S. Lab for useful feedback.
Nicole Pashley was supported by the National Science Foundation Graduate Research Fellowship under Grant No. DGE1745303, while working on this paper.
The research reported here was also partially supported by the Institute of Education Sciences, U.S. Department of Education, through Grant R305D150040.
The opinions expressed are those of the authors and do not necessarily represent views of the Institute, the U.S. Department of Education, or the National Science Foundation.

\bibliographystyle{apalike}
\bibliography{ref}

%%%%%%%%%%%%%%%%%%%%%%%%%%%%%%%%%%%%%%%%%%%%%%%%%%%%%%%%%%%%%%%%%%%%%%%%%%
\appendix
\begin{center}
{\bf \Large  Appendices}

\vspace{0.5cm}
\end{center}
%%%%%%%%%%%%%%%%%%%%%
\section{Proofs}\label{append:sec_proof}
\begin{proof}[Proof of Theorem~\ref{corr:main_cond}]
Let $\Omega$ be the set of values that $w(\bZ)$ can take for $\bZ \in \Z_{\etaref}$. Let $H(Z) = \eta_{w(\bZ)}$.
Then
	\begin{flalign*}
		P_{\etaref}\bigg\{\tau \in \Cz{H(\bZ)} \bigg\} 
        &= \sum_{\bz \in \Z_{\etaref}} P_{\etaref}(\bZ = \bz) \iv\{\tau \in \Czln{H(\bz)}\} \\
        &= \sum_{\bz \in \Z_{\etaref}} \sum_{\upsilon \in \Omega}P_{\etaref}(\bZ =\bz|w(\bZ) = \upsilon)P_{\etaref}( w(\bZ) = \upsilon) \iv\{\tau \in \Czln{H(\bz)}\} \\
        &= \sum_{\upsilon \in \Omega}P_{\etaref}(w(\bZ) = \upsilon)\sum_{\bz: w(\bz) = \upsilon} P_{\etaref}(\bZ =\bz|w(\bZ) = \upsilon)\iv\{\tau \in \Czln{\eta_\upsilon}\} \\
       % &= \sum_{\bz}P_{\etaref}(\bz \mid w)\iv\{\tau \in \Czl{H}\} \\
       % &= P_{\eta_w}\bigg( \tau \in \Cz{H(\bZ)}\bigg) \\
       &= \sum_{\upsilon \in \Omega}P_{\etaref}(w(\bZ) = \upsilon)\sum_{\bz: w(\bz) = \upsilon}P_{\eta_\upsilon}(\bZ=\bz)\iv\{\tau \in \Czln{\eta_\upsilon}\} \\
       &\geq \sum_{\upsilon \in \Omega}P_{\etaref}(w(\bZ) = \upsilon)\gamma \\
       &\geq \gamma
%		&= \int p(w) P_{\eta_w}\bigg( \tau \in C_{\eta_w}(Z; \underline{Y}, \eta)\bigg) \, dw 
	\end{flalign*}
    which concludes the proof.
The second to last step comes from Proposition~\ref{prop:nulldesignmap}:

\begin{equation*}
    \sum_{\bz: w(\bz) = \upsilon}P_{\eta_\upsilon}(\bZ=\bz)\iv\{\tau \in \Czln{\eta_\upsilon}\} =P_{\eta_{\upsilon}}\bigg(\tau \in \Cz{\eta_{\upsilon}}\bigg) 
     = \gamma.
\end{equation*}

\end{proof}

%%%%%%%%%%%%%%%%%%%%%

\begin{proof}[Proof of Corollary~\ref{example:partition-2}]
We have $\mathcal{P} = \{\Z_{(1)}, \ldots, \Z_{(K)}\}$ a partition of the set of all possible assignments, $\Z_{\etaref}$ and
	\begin{equation*}
		w(\bZ) =\sum_k^K k \iv\{\bZ \in \Z_{(k)}\}.
	\end{equation*}
    Then
	\begin{flalign*}
	P_{\etaref}(w(\bZ) = k) &= \etaref(\Z_{(k)}).
	\end{flalign*}
	We have
	\begin{flalign*}
		P_{\etaref}(\bZ \mid w(\bZ) = k) 
		&= \begin{cases}
			\frac{\etaref(\bZ)}{\etaref(\Z_{(k)})} & \text{if $\bZ \in \Z_{(k)}$} \\
			0 &\text{ otherwise }
		\end{cases}\\
		&= \eta_k(\bZ).
	\end{flalign*}
	Hence $H: \bZ \to \eta_{w(\bZ)}$ leads to an $\etaref$-valid procedure, as a consequence of Theorem~\ref{corr:main_cond}.
\end{proof}

%%%%%%%%%%%%%%%%%%%%%
\begin{proof}[Proof of Unique Covariate Balance for Rerandomized As-if Design]
We have that $X_i$ is continuous for each $i$.
For example, $X_i$ could be normally distributed.
So for any random finite population and any fixed vector $a$ with $i$th entry $a_i$, such that $a_i \neq 0$ for some $i$,
\begin{align*}
P(\sum_{i=1}^n a_i X_i = 0 )= 0.
\end{align*}

The difference in covariate means (balance) is
\[\Delta_X(\bZ) = \frac{1}{N_1} \sum_{i=1}^n Z_i X_i  - \frac{1}{N-N_1} \sum_{i=1}^n (1-Z_i) X_i = \sum_{i=1}^n \left(\frac{Z_i}{N_1}-\frac{1-Z_i}{N-N_1}\right)  X_i .\]
The difference in the imbalance between assignment $\bZ$ and $\bz$ is
\[\Delta_X(\bZ) - \Delta_X(\bz) = \sum_{i=1}^n \left(\frac{Z_i-Z_i'}{N_1}-\frac{Z_i'-Z_i}{N-N_1}\right)  X_i=\left(\frac{1}{N_1}-\frac{1}{N-N_1}\right)\sum_{i=1}^n(Z_i-Z_i')   X_i. \]
Let $a_i = \left(\frac{1}{N_1}-\frac{1}{N-N_1}\right)(Z_i-Z_i')$.
Then we have 
\begin{align*}
P( \Delta_X(\bZ) - \Delta_X(\bz) = 0 )  = 0.
\end{align*}
Hence, 
\[P(\Delta_X(\bZ) - \Delta_X(\bz) = 0 \quad  \forall \bz \neq\bZ) \leq  \sum_{\bz \neq \bZ}P( \Delta_X(\bZ) - \Delta_X(\bz) = 0 ) = 0.\]
This implies that for any random finite population, the probability of any given assignment having the same covariate balance measure as another assignment is zero.
Or, in other words, with probability one each covariate balance is unique.
\end{proof}

%%%%%%%%%%%%%%%%%%%%%
\begin{proof}[Proof of Special Case of 0 Coverage for Rerandomized As-if Design]
We now consider the special case of Example~\ref{example:rerand}, with strict inequality for the rerandomization.
That is, we have an original design of complete randomization and design map  $H: \bZ \rightarrow P(\bz|\bz \in \mathcal{A}(\bZ))$ where $\mathcal{A}(\bZ) = \{\bz: |\Delta_X(\bz)| < | \Delta_X(\bZ)|\}$ is the set of assignments with covariate balance strictly better than the observed covariate balance. 
Consider the special case where $Y_i(0) = Y_i(1) = X_i$ ($i=1,...,N$), so $\tau = 0$.
In that case, we have $\hat{\tau}(\bZ) = \Delta_X(\bZ)$, and so for $\bZ \sim \etaref$,
\begin{equation*}
	\forall \bz \in \mathcal{A}(\bZ), \quad |\hat{\tau}(\bZ)| > |\hat{\tau}(\bz)|.
\end{equation*}
In particular, $\forall \bZ \in \Z_{\etaref}$,
$\hat{\tau}(\bZ) \not \in [\Lz{H(\bZ)}, \Uz{H(\bZ)}]$, and thus
\begin{flalign*}
	P_{\etaref}(\tau \in \Cz{H(\bZ)}) &= P_{\etaref}(0 \in \Cz{H(\bZ)})\\
    &= P_{\etaref}(\hat{\tau} - \Uz{H(\bZ)} \leq 0 
       \leq \hat{\tau} - \Lz{H(\bZ)}) \\
    &= P_{\etaref}(\Uz{H(\bZ)} \geq \hat{\tau} \geq \Lz{H(\bZ)})\\
    &= P_{\etaref}(\hat{\tau} \in [\Lz{H(\bZ)}, \Uz{H(\bZ)}]) \\
    &= 0.
    \end{flalign*}
Hence, analyzing the experiment as if it came from a stricter rerandomized design, whose non-inclusive maximal imbalance is that of the observed assignment, leads to a coverage of 0.
\end{proof}

%%%%%%%%%%%%%%%%%%%%%
\begin{proof}[Proof of Theorem~\ref{theorem:main_cond}]
	We have
	\begin{flalign*}
		P_{\etaref, m}\bigg\{\tau \in \Cz{H(\bZ)} \bigg\} 
		&= \int p(w) \bigg( \sum_{\bz \in \Z_{\etaref}} P_{\etaref}(\bZ=\bz \mid w) \iv\{\tau \in \Czln{H(\bz)}\}\bigg) \, dw \\
		&= \int p(w) \bigg( \sum_{\bz \in \Z_{\etaref}} \eta_w(\bz) \iv\{\tau \in \Czln{\eta_w}\}\bigg) \, dw \\
		&= \int p(w) P_{\eta_w}\bigg( \tau \in \Cz{\eta_w}\bigg) \, dw. \\
	\end{flalign*}
	Again, the key is to notice that by Proposition~\ref{prop:nulldesignmap}, $P_{\eta_w}\bigg(\tau \in \Cz{\eta_w}\bigg) \geq \gamma$.
	Hence, 
	\begin{flalign*}
		P_{\etaref, m}\bigg\{\tau \in \Cz{H(\bZ)} \bigg\} 
		&= \int p(w) P_{\eta_w}\bigg( \tau \in \Cz{\eta_w}\bigg) \, dw \\
		&\geq \int p(w) (\gamma) \, dw \\
		&= \gamma,
	\end{flalign*}
	which concludes the proof.
\end{proof}
%%%%%%%%%%%%%%%%%%%%%

\section{A simple example of relevance}\label{append:simp_rel}
\subsection{Relevance and betting}
First we review how the concepts of validity and relevance as formalized by \cite{buehler1959some} 
and \cite{robinson1979conditional} is made visual by the concept of betting. We follow broadly 
the setup of \cite{buehler1959some}, with some modifications to accommodate our notation.
Consider a betting game between two players:
\begin{enumerate}
	\item Player 1 chooses a distribution $\eta^\ast$ and a confidence level $\beta$ for the confidence procedure. Intuitively, this 
	corresponds to the claim $P(\tau \in \Cz{\eta^\ast}) = \beta$. 
	
	\item Player 2 selects two disjoint subsets of  $\Z$, $A^+$ and $A^-$.
	If $\bZ \in A^+$, Player 2 will bet that the confidence interval captured $\tau$.
	If $\bZ \in A^-$, Player 2 will bet that the confidence interval did not capture $\tau$.
	If $\bZ \notin A^+ \cup A^-$, Player 2 does not make a bet.
	We denote this strategy $S(A^+, A^-)$ and mathematically define it as follows:
	\begin{itemize}
		\item If $\bZ \in A^+$, bet that $\tau \in \Cz{\eta^\ast}$, betting $\beta$ to win $1$.
		\item If $\bZ \in A^-$, bet that $\tau \not \in \Cz{\eta^\ast}$, betting $1-\beta$ to win $1$.
	\end{itemize}
\end{enumerate}
The return of this game, for Player 2, is
\begin{equation*}
	R = \begin{cases}
		\iv\{\tau \in \Cz{\eta^\ast}\} (1-\beta) - \iv\{\tau \not \in \Cz{\eta^\ast}\} \beta &\text{ if } \quad Z \in A^+\\
		\iv\{\tau \not \in \Cz{\eta^\ast}\} \beta - \iv\{\tau \in \Cz{\eta^\ast}\} (1-\beta) &\text{ if } \quad Z \in A^- .\\
	\end{cases}
\end{equation*}
and the expected return is obtained by integrating over the design $\eta$. Define 
\begin{equation*}
	\beta^+ = P(\tau \in \Cz{\eta^\ast} \mid Z \in A^+)
\end{equation*}
and 
\begin{equation*}
	\beta^- = P(\tau \in \Cz{\eta^\ast} \mid Z \in A^-).
\end{equation*}
We then have
\begin{flalign*}
	E(R) &= \{\beta^+ (1-\beta) - (1-\beta^+) \beta\} P(Z \in A^+) + \{(1-\beta^-)\beta - \beta^- (1-\beta)\} P(Z \in A^-) \\
	&= \{\beta^+ - \beta\} P(Z \in A^+) + \{\beta - \beta^-\} P(Z \in A^-) .\\
\end{flalign*}
We can cast the validity criteria defined in Section~\ref{subsec:relevance} in terms of this betting (\cite{buehler1959some} uses 
the term \textit{weak exactness}). Consider the strategy $S(\Z, \emptyset)$, where $\Z$ is 
the set of all assignments. Clearly, $P(Z \in A^-) = P(Z \in \emptyset) = 0$ and $P(Z \in A^+) = P(Z \in \Z) = 1$. Moreover, $\beta^+ = P(\tau \in \Cz{\eta^\ast} \mid Z \in A^+) = P(\tau \in \Cz{\eta^\ast})$, so
\begin{equation*}
	E(R) = \beta^+ - \beta = P(\tau \in \Cz{\eta^\ast}) - \beta.
\end{equation*}
It is then easy to see that the expected return for this strategy is null if and only if the confidence interval of 
Player 1 has the advertised coverage $\beta$,
\begin{equation*}
	E(R) = 0 \quad \Leftrightarrow \quad P(\tau \in \Cz{\eta^\ast}) = \beta,
\end{equation*}
which corresponds to our frequentist strict  validity criterion. We state the following proposition:
\begin{proposition}\label{prop:validity-equiv}
	The following assertions are equivalent:
	\begin{enumerate}
		\item The procedure $\Cz{\eta^\ast}$ is strictly valid, in the frequentist sense.
		\item The expected return of strategy $S(\Z, \emptyset)$ is zero.
		\item The expected return of strategy $S(\emptyset, \Z)$ is zero.
	\end{enumerate}
\end{proposition}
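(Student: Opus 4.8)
The plan is to prove Proposition~\ref{prop:validity-equiv} by a direct computation of the two relevant expected returns, showing that each equals (up to sign) the quantity $P(\tau \in \Cz{\eta^\ast}) - \beta$, so that both vanish precisely when strict validity holds. The excerpt has already done most of the work for the strategy $S(\Z, \emptyset)$, establishing $E(R) = P(\tau \in \Cz{\eta^\ast}) - \beta$ and hence the equivalence (1) $\Leftrightarrow$ (2). So the only genuinely new piece is handling $S(\emptyset, \Z)$ and tying everything together.

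First I would recall the general formula derived just above the proposition,
\[
E(R) = \{\beta^+ - \beta\}\, P(Z \in A^+) + \{\beta - \beta^-\}\, P(Z \in A^-),
\]
and specialize it to $A^+ = \emptyset$, $A^- = \Z$. Then $P(Z \in A^+) = 0$ and $P(Z \in A^-) = 1$, so the first term drops and $E(R) = \beta - \beta^-$, where now $\beta^- = P(\tau \in \Cz{\eta^\ast} \mid Z \in \Z) = P(\tau \in \Cz{\eta^\ast})$ since conditioning on the whole space is vacuous. Hence $E(R) = \beta - P(\tau \in \Cz{\eta^\ast})$, which is zero if and only if $P(\tau \in \Cz{\eta^\ast}) = \beta$, i.e. if and only if the procedure is strictly valid. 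This gives (1) $\Leftrightarrow$ (3). Combining with the already-established (1) $\Leftrightarrow$ (2), all three assertions are equivalent, which is what we wanted.

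Then I would note the pleasant symmetry: $S(\Z,\emptyset)$ is the strategy of always betting that the interval covers, $S(\emptyset,\Z)$ is the strategy of always betting that it does not, and the two expected returns are exact negatives of one another; a valid procedure is precisely one against which neither ``blind'' strategy has positive expected return (indeed, exactly zero). It is worth remarking that the odds $\beta$ to win $1$ for the ``covers'' bet and $1-\beta$ to win $1$ for the ``does not cover'' bet are exactly the fair odds under the advertised coverage, which is why a mismatch between advertised and actual coverage translates directly into an exploitable edge.

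I do not anticipate a real obstacle here — the proposition is essentially a bookkeeping consequence of the return formula already set up in the appendix, and the ``hard part,'' to the extent there is one, is just being careful that conditioning on $\Z$ (respectively on $\emptyset$, which we never actually need since that term has weight zero) is handled correctly so that $\beta^-$ collapses to the unconditional coverage. Everything else is substitution into the displayed $E(R)$ formula.
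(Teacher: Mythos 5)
Your proof is correct and follows essentially the same route as the paper: the text preceding the proposition already establishes the equivalence of (1) and (2) by specializing the displayed formula for $E(R)$ to $S(\Z,\emptyset)$, and the paper handles $S(\emptyset,\Z)$ by exactly the computation you give, namely $E(R)=\beta - P(\tau\in \Cz{\eta^\ast})$. Nothing is missing.
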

In order to better understand Proposition~\ref{prop:validity-equiv}, suppose that the choice of design $\eta^\ast$ 
leads to an interval with coverage below the advertised level $\beta$. That is, such that
\begin{equation*}
	P(\tau \in \Cz{\eta^\ast}) < \beta.
\end{equation*}
Now, under the strategy $S(\emptyset, \Z)$ we have
\begin{flalign*}
	E(R) = \beta - P(\tau \in \Cz{\eta^\ast}) > 0
\end{flalign*}
and so Player 2 can make money by betting against Player 1. 
Similarly, it is easy to verify that if the interval has coverage
greater than advertised, the strategy $S(\Z, \emptyset)$ has positive return. The general idea here is 
that if Player 1 truly believes his confidence assertion, he should be willing to play against the strategies 
$S(\emptyset, \Z)$ or $S(\Z, \emptyset)$.

One insight from the betting perspective is that ensuring the strategies $S(\Z, \emptyset)$ and $S(\emptyset, \Z)$ have zero expected return may not be stringent enough a criterion for procedures.
Suppose that the design $\eta^\ast$ is such that the procedure $\Cz{\eta^\ast}$ has coverage probability $\beta$, as advertised.
Now suppose that there is a set of assignments $A$ such that
\begin{equation*}
	P(\tau \in \Cz{\eta^\ast} \mid Z \in A) = \beta^\ast > \beta.
\end{equation*}
The key question is the following: if the observed assignment $Z \in A$, should Player 1 report the confidence $\beta^\ast$ or $\beta$?
This is equivalent to asking, if you have information based on your observed assignment that you should expect better or worse coverage using the standard method, should you use that information to determine your actual confidence level?
The betting framework offers one perspective on the problem. 
Consider the strategy $S(A, \emptyset)$. The expected return is
\begin{equation*}
	E(R) = (\beta^\ast - \beta) P(\bZ \in A) > 0.
\end{equation*}
So the strategy $S(A, \emptyset)$ leads to a positive expected gain, and Player 2 can make money off of 
Player 1 by exploiting this strategy.
Following the nomenclature in \cite{buehler1959some}, we give a definition of relevance.
\begin{definition}
	A subset of assignment $A \subset \Z$ is said to be {\em relevant} for the pair 
	$\{\Cz{\eta^\ast_Z}, \beta\}$ if either $S(A, \emptyset)$ or $S(\emptyset, A)$ have non-zero expected revenue. More generally a strategy $S(A^+, A^-)$ is said to be {\em relevant} if it has non-zero expected 
	revenue.
\end{definition}

This suggests that the existence of relevant strategies against $\{\Cz{\eta^\ast_Z}, \beta\}$ is problematic 
for the procedure, even if the procedure is valid.
In fact, the notion of relevance relates to that of conditional validity.
\begin{proposition}
	The set $A$ is not relevant for $\{\Cz{\eta^\ast_Z}, \beta\}$ if and only if the procedure is valid 
	conditionally on $A$. That, is
	\begin{equation*}
	E_{S(A,\emptyset)}(R) = 0 \quad \Leftrightarrow \quad E_{S(\emptyset, A)}(R) = 0 \quad 
	\Leftrightarrow \quad P(\tau \in \Cz{\eta^\ast_Z} \mid \bZ \in A) = \beta.
	\end{equation*}
\end{proposition}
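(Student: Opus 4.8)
The plan is to specialize the general expected-return identity already derived in the betting setup to the two one-sided strategies $S(A,\emptyset)$ and $S(\emptyset,A)$, and to observe that both collapse to (plus or minus) the same scalar. Recall that for an arbitrary strategy $S(A^+,A^-)$ we established
\[ E(R) = \{\beta^+ - \beta\}\, P(\bZ \in A^+) + \{\beta - \beta^-\}\, P(\bZ \in A^-), \qquad \beta^{\pm} = P(\tau \in \Cz{\eta^\ast} \mid \bZ \in A^{\pm}). \]
First I would substitute $A^+ = A$, $A^- = \emptyset$: since $P(\bZ \in \emptyset) = 0$, the second term drops and $E_{S(A,\emptyset)}(R) = \big(P(\tau \in \Cz{\eta^\ast} \mid \bZ \in A) - \beta\big)\, P(\bZ \in A)$. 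Next I would substitute $A^+ = \emptyset$, $A^- = A$; by the same reasoning the first term drops and $E_{S(\emptyset,A)}(R) = \big(\beta - P(\tau \in \Cz{\eta^\ast} \mid \bZ \in A)\big)\, P(\bZ \in A)$, so in fact $E_{S(\emptyset,A)}(R) = -\,E_{S(A,\emptyset)}(R)$.

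With these two identities in hand the equivalences follow immediately. Taking $P(\bZ \in A) > 0$, each of the conditions $E_{S(A,\emptyset)}(R) = 0$ and $E_{S(\emptyset,A)}(R) = 0$ holds if and only if the bracketed factor vanishes, i.e. $P(\tau \in \Cz{\eta^\ast} \mid \bZ \in A) = \beta$; and the relation $E_{S(\emptyset,A)}(R) = -\,E_{S(A,\emptyset)}(R)$ shows directly that one vanishes exactly when the other does. Unwinding the definition of relevance — the set $A$ is relevant precisely when at least one of $S(A,\emptyset)$, $S(\emptyset,A)$ has nonzero expected revenue — this is exactly the statement that $A$ is not relevant for $\{\Cz{\eta^\ast},\beta\}$ if and only if the procedure has conditional coverage exactly $\beta$ given $\bZ \in A$.

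I do not anticipate a genuine obstacle: the result is a direct substitution into an identity that has already been proved, using only the linearity of expectation that produced $E(R)$ in the first place. The one point deserving a word of care is the degenerate case $P(\bZ \in A) = 0$, in which both expected returns vanish identically while the conditional probability is undefined; I would handle this simply by restricting attention to sets $A$ with $P(\bZ \in A) > 0$, which is implicit whenever conditional coverage on $A$ is discussed. No appeal to asymptotics, oracle quantiles, or the internal structure of $\Cz{\cdot}$ is needed.
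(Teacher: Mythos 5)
Your proposal is correct and follows exactly the route the paper intends: the paper leaves this proposition unproved precisely because it is a direct substitution of $A^+=A,\ A^-=\emptyset$ (and vice versa) into the already-derived identity $E(R) = \{\beta^+ - \beta\}P(\bZ \in A^+) + \{\beta - \beta^-\}P(\bZ \in A^-)$, which is what you do. Your explicit handling of the degenerate case $P(\bZ \in A)=0$ is a small point of care the paper glosses over.
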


%%%%%%%%%%
\subsection{Simple example}
Consider the usual causal inference setup with $N$ units, assuming SUTVA.
The design $\eta_0$ is Bernoulli, excluding $\bZ = (1,\dots,1)$ and $\bZ = (0,\dots,0)$.
We consider the difference in means estimator $\hat{\tau}$.
We assume a constant additive treatment effect such that for all $i$, $Y_i(1) - Y_i(0) = \tau$.
Now, conditional on $N_1 = k$, this is a completely randomized experiment, and $E(\hat{\tau} \mid N_1 = k) = \tau$.
The well known result for the variance of a completely randomized design yields
\begin{flalign*}
	\text{Var}(\hat{\tau} \mid N_1 = k) &= V^\ast \bigg( \frac{1}{k} + \frac{1}{N-k}\bigg) \\
	&\equiv v(k),
\end{flalign*}
where $V^\ast$ is the sample variance of the potential outcomes under one treatment, which is the same as under the other treatment due to the additive treatment effect.
The estimator is also unbiased unconditionally, $E(\hat{\tau}) = \tau$, but the unconditional variance becomes
\begin{flalign*}
	\text{Var}(\hat{\tau}) &= E[Var(\hat{\tau} \mid N_1)] + \text{Var}( E[\hat{\tau} \mid N_1]) \\
	&= E\bigg[V^\ast \bigg(\frac{1}{N_1} + \frac{1}{N_0}\bigg)\bigg] \\
	&= V^\ast E\bigg( \frac{1}{N_1} + \frac{1}{N-N_1}\bigg)\\
	&\equiv V,
\end{flalign*}
where the expectation is with respect to the distribution of $N_1$ induced by the design $\eta_0$.
We assume that we've reached the asymptotic regime, and so $\hat{\tau} \sim \mathcal{N}(\tau, V)$, where randomness is induced by design $\eta_0$. See Section 6 of \cite{li2017general} for an argument for why the CLT holds here.
Consider the following confidence intervals:
\begin{equation*}\label{eq:ci}
	\Cz{\eta_0} = [\hat{\tau} - 1.96 \sqrt{V}, \hat{\tau} + 1.96 \sqrt{V}].
\end{equation*}
By construction, we have $P(\tau \in  \Cz{\eta_0}) = \beta = 0.95$.
There exist winning betting strategies against this interval.
Define $\mathcal{K} = \{k: v(k) < V\}$.
Note that we have
\begin{equation*}
	v(k) < V \quad \Leftrightarrow \quad \frac{1}{k} + \frac{1}{N-k} < E\bigg( \frac{1}{N_1} + \frac{1}{N-N_1}\bigg),
\end{equation*}
which doesn't depend on $V^\ast$. So $\mathcal{K}$ doesn't depend on $V^\ast$ either.
Define $A = \{Z: N_1 \in \mathcal{K}\}$, and $A_k = \{Z: N_1 = k\}$.
Now notice that
\begin{flalign*}
	P(\tau \in \Cz{\eta_0} | N_1 = k ) &= P( -1.96 \leq \frac{\hat{\tau} - \tau}{\sqrt{v(k)}} \sqrt{\frac{v(k)}{V}} \leq 1.96 \mid N_1 = k)
\end{flalign*}
but $\frac{\hat{\tau} - \tau}{\sqrt{v(k)}} | N_1 = k \sim \mathcal{N}(0,1)$.
And so
\begin{flalign*}
	P(\tau \in \Cz{\eta_0} | N_1 = k ) 
	&=P\left( -1.96\sqrt{\frac{V}{v(k)}} \leq \frac{\hat{\tau} - \tau}{\sqrt{v(k)}}  \leq 1.96\sqrt{\frac{V}{v(k)}} \Bigg|  N_1 = k\right)\\
	&= \Phi\left(1.96\sqrt{\frac{V}{v(k)}}\right) - \Phi\left(-1.96\sqrt{\frac{V}{v(k)}}\right)\\
	&= 1 - 2 \Phi\bigg(-1.96\sqrt{\frac{V}{v(k)}} \bigg) \\
	&\equiv \beta_k
\end{flalign*}

Now the key is to notice the following:
\begin{proposition}\label{prop:over-under-coverage}
	For all $k \in \mathcal{K}$, we have
	\begin{equation*}
		\beta_k > \beta
	\end{equation*}
	and similarly, for all $k \not \in \mathcal{K}$, we have
	\begin{equation*}
		\beta_k \leq \beta
	\end{equation*}
	where the strictness in the first equation comes for the the strictness in the definition 
	of $\mathcal{K}$.
\end{proposition}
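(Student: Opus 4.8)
The plan is to reduce everything to a single monotonicity observation: $\beta_k$ depends on $k$ only through $v(k)$, and, for $V$ held fixed, it is a strictly decreasing function of $v(k)$. Comparing $v(k)$ with $V$ then gives the two cases directly, with the strictness in the first case inherited from the strict inequality in the definition of $\mathcal{K}$.

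First I would record the closed forms already derived just above the statement: using the asymptotic approximation $\hat\tau \mid N_1 = k \sim \mathcal{N}(\tau, v(k))$ and the fact that $\Cz{\eta_0}$ was built so that $P(\tau \in \Cz{\eta_0}) = 0.95$ (equivalently $\Phi(-1.96) = 0.025$), we have
\[
  \beta_k = 1 - 2\Phi\bigl(-1.96\sqrt{V/v(k)}\bigr),
  \qquad
  \beta = 1 - 2\Phi(-1.96).
\]
Define $g(x) = 1 - 2\Phi\bigl(-1.96\sqrt{V/x}\bigr)$ for $x > 0$, so that $\beta_k = g(v(k))$ and $\beta = g(V)$.

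Next I would establish that $g$ is strictly decreasing on $(0,\infty)$. The map $x \mapsto \sqrt{V/x}$ is strictly decreasing, and multiplying by $-1.96$ reverses this, so $x \mapsto -1.96\sqrt{V/x}$ is strictly increasing; composing with the strictly increasing $\Phi$ shows $x \mapsto \Phi\bigl(-1.96\sqrt{V/x}\bigr)$ is strictly increasing, hence $g$ is strictly decreasing. Then the two claims follow by substitution: if $k \in \mathcal{K}$ then $v(k) < V$, so $\beta_k = g(v(k)) > g(V) = \beta$; if $k \notin \mathcal{K}$ then $v(k) \ge V$, so $\beta_k = g(v(k)) \le g(V) = \beta$, with equality exactly when $v(k) = V$.

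I do not expect a genuine obstacle here; the argument is elementary. The only points requiring care are bookkeeping the direction of the inequalities through the composition of decreasing maps, and flagging that the $\mathcal{N}(\tau, v(k))$ (and $\mathcal{N}(\tau,V)$) approximations used to obtain the closed form for $\beta_k$ are the asymptotic ones already invoked upstream, so the proposition is a statement about the oracle/Gaussian regime rather than an exact finite-sample identity.
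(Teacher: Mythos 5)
Your proof is correct and is exactly the monotonicity argument the paper relies on implicitly (the paper states the proposition without a separate proof, treating it as an immediate consequence of the displayed formula $\beta_k = 1 - 2\Phi\bigl(-1.96\sqrt{V/v(k)}\bigr)$ and the definition $\mathcal{K} = \{k : v(k) < V\}$). Your careful tracking of the direction of the inequalities through the composition, and the note that equality in the second case occurs exactly when $v(k) = V$, fills in precisely what the paper leaves to the reader.
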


With this in place it is easy to verify that $S(A, A^c)$ is a relevant strategy. Indeed, if
$N_1 = k \in \mathcal{K}$, we have
\begin{flalign*}
	E(R \mid \bZ \in A_k) &= P(\tau \in \Cz{\eta_0} \mid \bZ \in A_k) (1-\beta) - P(\tau \not \in \Cz{\eta_0} \mid \bZ \in A_k) \beta \\
	&= \beta_k - \beta \\
	&> 0
\end{flalign*}
and so
\begin{flalign*}
	E(R \mid \bZ \in A) &= \sum_{k \in \mathcal{K}} E(R \mid \bZ \in A_k) P(Z \in A_k | \bZ \in A) \\
	&= \sum_{k \in \mathcal{K}} (\beta_k - \beta) P(\bZ \in A_k \mid \bZ \in A) \\
	&> 0.
\end{flalign*}
Similar derivations show that $E(R \mid \bZ \in A^c) > 0$. And so
\begin{flalign*}
	E(R) &= E(R \mid \bZ \in A) P(\bZ \in A) + E(R \mid \bZ \in A^c) P(\bZ \in A^c) \\
	&> 0
\end{flalign*}
which means that the strategy $S(A, A^c)$ is relevant for the procedure $\{\Cz{\eta_0}, \beta\}$.

%%%%%%%%%%%%%%%%%%%%%

\end{document}